\providecommand{\U}[1]{\protect\rule{.1in}{.1in}}
\theoremstyle{remark}
\newtheorem{rem}{\protect\remarkname}
\theoremstyle{plain}
\newtheorem{prop}{\protect\propositionname}
\newtheorem{assumption}{\protect\assumptionname}
\newtheorem{thm}{\protect\theoremname}
\newtheorem{lem}{\protect\lemmaname}
\newtheorem{cor}{\protect\corollaryname}
\providecommand{\assumptionname}{Assumption}
\providecommand{\corollaryname}{Corollary}
\providecommand{\lemmaname}{Lemma}
\providecommand{\propositionname}{Proposition}
\providecommand{\remarkname}{Remark}
\providecommand{\theoremname}{Theorem}
\begin{document}
\title{Estimation and Inference on Average Treatment Effect in Percentage Points under Heterogeneity\thanks{The estimation and inference methods discussed in this paper are implemented in an accompanying R package and Stata command. The Stata package \texttt{ate\_pct} is available from the SSC repository, with the latest version available at https://github.com/zengying17/ate\_pct-stata. The R package \texttt{atepct} is available at https://github.com/zengying17/ate\_pct-r. Illustrative replication code can be found at https://github.com/zengying17/ate\_pct-replication.}}
\author{Ying Zeng\thanks{E-mail address: zengying17@xmu.edu.cn.}\\
School of Economics and Wang Yanan Institute for Studies in Economics, \\
Xiamen University, China}
\maketitle
\begin{abstract}
In semi-logarithmic regressions, treatment coefficients are often interpreted as approximations of an average treatment effect (ATE) in percentage points. This paper highlights the overlooked bias of this approximation under treatment effect heterogeneity, arising from Jensen's inequality. The issue is particularly relevant for difference-in-differences designs with log-transformed outcomes and staggered treatment adoption, where treatment effects may vary across groups and periods. This paper proposes new estimation and inference methods for an estimand that accounts for heterogeneity across observable subgroups and can improve upon conventional measures. The estimand provides a lower bound on the ATE in percentage points for the relevant target (sub)population, and coincides with it in the absence of within-group heterogeneity. I establish the methods' large-sample properties and study their finite-sample performance through Monte Carlo experiments, which reveal substantial discrepancies between conventional and proposed measures when systematic heterogeneity is large. Two empirical applications further underscore the practical importance of these methods.

Keywords: Treatment effect heterogeneity, Semi-log regression, Average treatment effect, Difference-in-differences, Percentage point
\end{abstract}
\global\long\def\diag{\operatorname{diag}}%
\global\long\def\tr{\operatorname{tr}}%
\global\long\def\Var{\operatorname{Var}}%
\global\long\def\det{\operatorname{det}}%
\global\long\def\Cov{\operatorname{Cov}}%
\global\long\def\argmin{\operatorname{argmin}}%
\global\long\def\plim{\operatorname{plim}}%
\global\long\def\exp{\operatorname{exp}}%
\global\long\def\hyp{\operatorname{_{0}F_{1}}}%

\section{Introduction}

Semi-logarithmic regression models, where the dependent variable is in the natural logarithm form, are widely used in empirical studies. In these models, a small coefficient on a binary treatment is often interpreted as an approximate average treatment effect (ATE) in percentage terms, possibly for some subpopulation such as treated units or compliers for an instrument \citep{chen_logs_2024,hansen_econometrics_2022}. This interpretation is based on the following logic: Let $Y_{1i}$ and $Y_{0i}$ denote the potential outcomes under treatment and non-treatment respectively for individual $i$. The treatment effect in log points is defined as $\tau_{i}=\ln(Y_{1i})-\ln(Y_{0i})$, while the percentage change is given by $\rho_{i}=\left(Y_{1i}-Y_{0i}\right)/Y_{0i}=\exp(\tau_{i})-1$. When $|\tau_{i}|$ is small, $\exp(\tau_{i})-1\approx\tau_{i}$, justifying the interpretation of $\tau_{i}$ as an approximate percentage effect.

When treatment effects are heterogeneous, however, neither the ATE in log points, $\bar{\tau}:=E(\tau_{i})$, nor $\exp(\bar{\tau})-1$ can be interpreted as the ATE in percentage points $\bar{\rho}:=E(\rho_{i})$. The intuition is straightforward: Jensen's inequality implies $E[\exp(\tau_{i})]-1\geqslant\exp[E(\tau_{i})]-1$, hence $\bar{\rho}\geqslant\exp(\bar{\tau})-1$, with equality holding if and only if log-point treatment effects are constant. Consequently, $\exp(\bar{\tau})-1<\bar{\rho}$ under treatment effect heterogeneity, with disparities increasing as treatment effects exhibit greater variation.

The bias that arises when the ATE in log points $\bar{\tau}$ or $\exp(\bar{\tau})-1$ is interpreted as the ATE in percentage points has received limited attention in the literature, except for \citet{chen_logs_2024} who show that $\bar{\rho}$ cannot be point identified if arbitrary heterogeneity is allowed. This oversight may lead to misleading interpretations of results in empirical studies with heterogeneous treatment effects and log-transformed outcomes. A prominent example is the semi-log difference-in-differences model with staggered treatment adoption, where different groups start receiving treatment at different times. Recent work has proposed heterogeneity-robust estimators for such models, whose estimands can be expressed as convex weighted averages of group-time treatment effects (\citealp{borusyak_revisiting_2024,callaway_differenceindifferences_2021,dechaisemartin_twoway_2020,sun_estimating_2021}, etc.). When outcome variables are log-transformed, these estimators still represent some form of the ATE in log points $\bar{\tau}$. Consequently, these estimators and their exponential minus one should not be interpreted as the ATE in percentage points.

This paper highlights the often-overlooked problem of interpreting the ATE in log points as the ATE in percentage points in the context of heterogeneous treatment effects. Unfortunately, $\bar{\rho}$ cannot be point identified without additional distributional assumptions because $E\left[\left(Y_{1}-Y_{0}\right)/Y_{0}\right]$ is a functional of the joint distribution of $(Y_{1},Y_{0})$. Since $Y_{1}$ and $Y_{0}$ are not simultaneously observed for the same unit, only their marginal distributions are identified, and such functionals generally cannot be point identified \citep{callaway_bounds_2021,fan_partial_2017,heckman_making_1997}. In particular, Proposition 3 of \citet{chen_logs_2024} provides a formal proof that $\bar{\rho}$ is not point identified when heterogeneity is unrestricted. A growing literature develops partial identification approaches for such functionals, including \citet{callaway_bounds_2021}, \citet{fan_partial_2017}, \citet{firpo_partial_2019}, and \citet{frandsen_partial_2021}. Building on the framework of \citet{fan_partial_2017}, this paper derives sharp bounds for $\bar{\rho}$ under the assumption that $(Y_{1},Y_{0})$ is jointly independent of treatment conditional on observables. However, estimating these sharp bounds requires the relevant marginal distributions of the potential outcomes to be identified. This requirement is not generally satisfied in difference-in-differences designs under standard assumptions. Moreover, even in settings where the required marginals are identified, such as randomized experiments, estimation can be statistically challenging and involves choices such as the kernel and bandwidth. Point identification of $\bar{\rho}$ under heterogeneity is possible only under additional distributional assumptions. For example, the Online Appendix discusses estimation and inference of $\bar{\rho}$ under joint normality of the potential outcomes, an assumption that can be restrictive in practice.

Given the inherent difficulty of point identifying $\bar{\rho}$ under unrestricted heterogeneity, this paper proposes estimation and inference methods for $\rho_{b}=\sum_{g}w_{g}\left[\exp(\tau_{g})-1\right]$, a subgroup-weighted average of transformed average log-point treatment effects, where $w_{g}$ and $\tau_{g}$ are, respectively, the share and average log-point effect of subgroup $g$. By accounting for heterogeneity across observable subgroups, this estimand can improve upon the conventional measures $\exp(\bar{\tau})-1$ and $\bar{\tau}$ and remains point identified. When treatment effects are constant within subgroups, $\rho_{b}$ equals $\bar{\rho}$. When treatment effects are heterogeneous within subgroups, $\rho_{b}$ remains a valid lower bound for $\bar{\rho}$ that is at least as tight as $\exp(\bar{\tau})-1$ and $\bar{\tau}$, and is strictly tighter when subgroup average log-point effects vary.

The proposed estimation and inference methods for $\rho_{b}$ do not require strong distributional assumptions on the potential outcomes and are straightforward to implement. They are developed under a general econometric framework that encompasses semi-log regression models and semi-log staggered difference-in-differences designs. Their large-sample validity is established under standard regularity conditions, and their finite-sample performance is demonstrated through Monte Carlo simulations. To illustrate these methods' practical relevance and applicability, the paper presents two empirical applications: one uses a semi-log regression model to study the causal impact of exporting on firm productivity, and the other uses a staggered difference-in-differences design to examine the effect of water and sewerage systems on child mortality. Through these contributions, this paper provides researchers with tools to more accurately estimate and interpret treatment effects in percentage terms, particularly in the presence of heterogeneity.

In addition to the literature related to partial identification of $\bar{\rho}$, this paper also connects to two other strands of the literature. First, it contributes to a growing body of work on treatment effect heterogeneity, which has shown that conventional estimates assuming a constant treatment effect are weighted averages of treatment effects, with weights possibly negative and not equal to the true population share. Examples include ordinary least squares (OLS) estimators \citep{angrist_estimating_1998,gibbons_broken_2019,goldsmith-pinkham_contamination_2024,sloczynski_interpreting_2022}, two-stage least squares (2SLS) estimators \citep{imbens_identification_1994,mogstad_causal_2021}, and two-way fixed effects estimators for staggered difference-in-differences models \citep{borusyak_revisiting_2024,dechaisemartin_twoway_2020,goodman-bacon_differenceindifferences_2021,sun_estimating_2021}. Numerous heterogeneity-robust estimators have been developed, typically by estimating treatment effects and population shares for subgroups, and then aggregating them up to obtain an ATE for a target population (e.g., \citealp{callaway_differenceindifferences_2021,gibbons_broken_2019,goldsmith-pinkham_contamination_2024,sun_estimating_2021}). The results presented in this paper imply that when the outcome is log-transformed, the estimands of these approaches do not generally correspond to the average proportional treatment effect under heterogeneity, but instead yield non-sharp lower bounds for it. However, the heterogeneity-robust estimators in these papers form the basis for the estimation and inference methods proposed in the paper.

Second, this work supplements ongoing research into the potential pitfalls of using logarithmic transformations in empirical analyses. For instance, \citet{chen_logs_2024} and \citet{mullahy_why_2024} demonstrated that when dependent variables have many zero values, coefficients in regressions with log-like transformations such as $\ln(Y+1)$ do not have the interpretation of percentage effects. \citet{manning_logged_1998} and \citet{silva_log_2006} highlighted issues of using log-linear regressions for estimating the impact on the outcome's mean and elasticity under heteroscedasticity. The results in this paper further demonstrate that in the presence of treatment effect heterogeneity, coefficients in semi-log regressions may not have an ATE in percentage points interpretation.

The remainder of the paper is structured as follows. Section \ref{sec:ATE-in-Percentage} develops the framework for analyzing ATE in percentage points under treatment effect heterogeneity. It first characterizes the relationship between log-point and percentage-point effects and the bias induced by Jensen's inequality. It then proposes estimation and inference methods for a group-aggregated estimand, $\rho_{b}$, which accommodates heterogeneity across observable subgroups. Finally, it derives sharp bounds for the average proportional effect under partial identification. Section \ref{sec:MC} reports the results of Monte Carlo simulations for the proposed methods, while Section \ref{sec:Empirical-Applications} presents two empirical applications. Section \ref{sec:Conclusion} concludes. The Online Appendix provides detailed discussions of examples of the proposed methods, discusses estimation and inference of $\bar{\rho}$ under joint normality of potential outcomes, introduces a simple bias-correction method, and presents additional Monte Carlo and empirical results.

\section{ATE in Percentage Points: Framework and Identification\label{sec:ATE-in-Percentage}}

\subsection{Log-Point and Percentage-Point Treatment Effects under Heterogeneity\label{subsec:set_up}}

This section establishes the formal relationship between the ATE in log points and the ATE in percentage points. The setup allows log-point treatment effects to be heterogeneous both across and within groups, commonly referred to in the literature as systematic and idiosyncratic heterogeneity, respectively \citep{djebbari_heterogeneous_2008}.

Suppose we are interested in the average treatment effect in percentage points for a target (sub)population $P$. The target (sub)population may be the full population, the treated subpopulation, or another policy-relevant population. Let $P$ be partitioned into $G$ mutually exclusive and collectively exhaustive subgroups, indexed by $g=1,\ldots,G$, where $G$ is finite. Let $w_{g}$ denote the share of subgroup $g$ in $P$, where $w_{g}>0$ and $\sum^{G}_{g=1}w_{g}=1$. Let $D^{(g)}_{i}=1$ if individual $i$ belongs to subgroup $g$ of the target (sub)population, and 0 otherwise. Let $Y_{0i}$ and $Y_{1i}$ represent the potential outcomes for individual $i$ if not treated and if treated respectively. I assume that $Y_{0i}>0$ and $Y_{1i}>0$ for all $i$, so that their natural logarithms are well defined.
\begin{rem}
When outcome variables contain many zero values, alternative frameworks may be more appropriate. Specifically, Poisson regression models, as proposed by \citet{chen_logs_2024} and \citet{mullahy_why_2024}, can be used to estimate the ATE in levels as a percentage of the baseline mean.
\end{rem}
The individual treatment effect in log points is defined as $\tau_{i}=\ln\left(Y_{1i}\right)-\ln\left(Y_{0i}\right)$, and the average log-point effect for group $g$ is $\tau_{g}=E_{P}(\tau_{i}|D^{(g)}_{i}=1)$, assuming this conditional expectation exists and is finite. The ATE in log points for the target (sub)population $P$ is
\begin{equation}
\bar{\tau}=E_{P}(\tau_{i})=\sum^{G}_{g=1}w_{g}\tau_{g}.\label{eq:taubar}
\end{equation}
 The individual treatment effect in percentage points is $\rho_{i}=\left(Y_{1i}-Y_{0i}\right)/Y_{0i}=\exp(\tau_{i})-1$, and the ATE in percentage points for group $g$ is $\rho_{g}=E_{P}(\rho_{i}|D^{(g)}_{i}=1)=E_{P}(\exp(\tau_{i})|D^{(g)}_{i}=1)-1$. The ATE in percentage points for the target (sub)population $P$ is 
\begin{equation}
\bar{\rho}=E_{P}(\rho_{i})=\sum^{G}_{g=1}w_{g}\rho_{g}.\label{eq:rhobar}
\end{equation}

The identity $\rho_{i}=\exp(\tau_{i})-1$ motivates the common practice of reporting
\[
\rho_{a}=\exp(\bar{\tau})-1
\]
as the ATE in percentage points. Alternatively, when $\bar{\tau}$ is close to zero, $\bar{\tau}$ itself is often used as an approximation to $\bar{\rho}$, since $\exp(x)-1\approx x$ for small $x$. More generally, it holds that $\exp(\bar{\tau})-1\geqslant\bar{\tau}$, with equality holding if and only if $\bar{\tau}=0$, because the function $\exp(x)-x-1$ is strictly convex and attains its unique minimum of zero at $x=0$.

However, Jensen's inequality implies $E_{P}[\exp(\tau_{i})]\geqslant\exp[E_{P}(\tau_{i})]$ and hence $\bar{\rho}\geqslant\rho_{a}$, with equality holding if and only if $\tau_{i}=\bar{\tau}$ a.s. Under treatment effect heterogeneity, $\bar{\rho}>\rho_{a}$. The gap can arise from within-group heterogeneity or from cross-group heterogeneity. First, due to within-group heterogeneity, for $g=1,\ldots,G$, we have $E_{P}\left[\exp(\tau_{i})|D^{(g)}_{i}=1\right]\geqslant\exp\left[E_{P}\left(\tau_{i}|D^{(g)}_{i}=1\right)\right]=\exp(\tau_{g})$, which implies that $\rho_{g}\geqslant\exp(\tau_{g})-1$. Define
\begin{equation}
\rho_{b}=\sum^{G}_{g=1}w_{g}\exp(\tau_{g})-1.\label{eq:rho_b}
\end{equation}
Then $\bar{\rho}\geqslant\rho_{b}$, with equality holding if and only if $\tau_{i}$ is constant within each subgroup. Second, due to heterogeneity across groups, $\sum^{G}_{g=1}w_{g}\exp(\tau_{g})\geqslant\exp\left(\sum^{G}_{g=1}w_{g}\tau_{g}\right)$, which implies $\rho_{b}\geqslant\exp(\bar{\tau})-1$, with equality holding if and only if $\tau_{g}=\bar{\tau}$ for all $g$. These results are summarized in the following proposition.
\begin{prop}
\label{prop:rhoplus}The ATE in percentage points satisfies 
\[
\bar{\rho}\geqslant\rho_{b}\geqslant\exp(\bar{\tau})-1\geqslant\bar{\tau},
\]
where the first inequality holds with equality if and only if $\Var(\tau_{i}|D^{(g)}_{i}=1)=0$ for all $g$; the second holds with equality if and only if $\tau_{g}=\bar{\tau}$ for all $g$; and the third holds with equality if and only if $\bar{\tau}=0$.
\end{prop}
Thus, under either systematic or idiosyncratic heterogeneity, neither the average log-point effect $\bar{\tau}$ nor $\exp(\bar{\tau})-1$ equals the average proportional effect $\bar{\rho}$, although $\exp(\bar{\tau})-1$ is generally less biased than $\bar{\tau}$.

Identifying $\bar{\rho}$ is challenging because it is a functional of the joint distribution of potential outcomes; see the Introduction for a discussion of the related literature. Using a partial identification approach developed by \citet{fan_partial_2017}, Section \ref{subsec:Partial-Identification} derives sharp upper and lower bounds for $\bar{\rho}$. However, this partial identification approach requires the relevant marginal distributions, or conditional marginal distributions when covariates are used, of $Y_{0i}$ and $Y_{1i}$ to be identified. This requirement is satisfied in randomized experiments, but is not generally satisfied in DID settings under standard assumptions. Even if these distributions are identified, estimation can be statistically demanding and requires choices such as kernels and bandwidths. In Section \ref{sec:normal heteogeneity} of the Online Appendix, I also discuss the estimation and inference of $\bar{\rho}$ under the assumption that $(Y_{0},Y_{1})$ is jointly normal. Although this approach delivers point identification of $\bar{\rho}$ under both systematic and idiosyncratic heterogeneity, the assumption of joint normality of potential outcomes can be too restrictive in practice and limits its applicability.

Because idiosyncratic heterogeneity is inherently difficult to characterize, this paper focuses on $\rho_{b}$, which captures heterogeneity across groups. When treatment effects are constant within each subgroup, the average proportional effect $\bar{\rho}$ is point identified and coincides with $\rho_{b}$. When within-group heterogeneity is present, Proposition \ref{prop:rhoplus} implies that $\rho_{b}$ remains a valid lower bound for $\bar{\rho}$. This lower bound is at least as sharp as $\exp(\bar{\tau})-1$ or $\bar{\tau}$, and is sharper under cross-group heterogeneity. The difference between $\rho_{b}$ and $\bar{\rho}$ therefore reflects residual idiosyncratic heterogeneity.

Accordingly, this paper develops estimation and inference methods for $\rho_{b}$, as discussed in Section \ref{sec:Estimation-and-Inference}. Compared to partial identification approaches and methods relying on joint normality assumptions, the proposed methods are straightforward to implement and rely on standard assumptions, providing a practical approach to analyzing treatment effects in percentage points under heterogeneity.

\subsection{Estimation and Inference on $\rho_{b}$\label{sec:Estimation-and-Inference}}

This section proposes a general econometric framework for the estimation and inference of $\rho_{b}$ in Eq. (\ref{eq:rho_b}), building upon consistent and asymptotically normal estimators of $w=(w_{1},\ldots,w_{G})^{\prime}$ and $\tau=(\tau_{1},\ldots,\tau_{G})^{\prime}$. The proposed methodology imposes no restrictions on the functional form or application context of these estimators, provided they satisfy the asymptotic condition stated below.

I maintain the setup introduced in Section \ref{subsec:set_up}. Specifically, the number of groups $G$ is finite, groups with zero weight in the target population are excluded from analysis, the weights $w_{g}$ are positive and sum to one, and each subgroup average log-point effect $\tau_{g}$ is well-defined and finite. These conditions ensure that $\bar{\tau}$ and $\rho_{b}$ are well defined and bounded. I allow $w_{g}=1$ to incorporate the case of $G=1$, representing no cross-group heterogeneity.

Let $\hat{w}=(\hat{w}_{1},\ldots,\hat{w}_{G})^{\prime}$ and $\hat{\tau}=(\hat{\tau}_{1},\ldots,\hat{\tau}_{G})^{\prime}$ denote estimators of $w$ and $\tau$ respectively.
\begin{assumption}
\label{assu:delta}Let $\delta=(w^{\prime},\tau^{\prime})^{\prime}$ and $\hat{\delta}=(\hat{w}^{\prime},\hat{\tau}^{\prime})^{\prime}$. There exists a $2G\times2G$ finite matrix $\bar{\Sigma}_{\delta}$ and its estimator $\hat{\bar{\Sigma}}_{\delta}$ such that 
\[
\sqrt{N}\left(\hat{\delta}-\delta\right)\xrightarrow{d}\mathcal{N}\left(0_{2G\times1},\bar{\Sigma}_{\delta}\right),
\]
and $\hat{\bar{\Sigma}}_{\delta}\xrightarrow{p}\bar{\Sigma}_{\delta}$ as the sample size $N$ goes to infinity.
\end{assumption}
Assumption \ref{assu:delta} accommodates both known and estimated weights. In the special case where $w$ is known, we have $\hat{w}=w$. The asymptotic variance of $\sqrt{N}(\hat{w}-w)$ is therefore $\bar{\Sigma}_{w}=0$ and $\Cov(\hat{w},\hat{\tau})=0$. This scenario arises naturally when subgroup weights are fixed by the target population or predetermined by design, such as staggered difference-in-differences designs where treatment effects for a given cohort are averaged across periods with equal and predetermined weights. Under known weights, $\bar{\Sigma}_{\delta}=\diag\{0_{G\times G},\bar{\Sigma}_{\tau}\}$ and $\hat{\bar{\Sigma}}_{\delta}=\diag\{0_{G\times G},\hat{\bar{\Sigma}}_{\tau}\}$, where $\bar{\Sigma}_{\tau}$ is the asymptotic variance of $\sqrt{N}(\hat{\tau}-\tau)$ and $\hat{\bar{\Sigma}}_{\tau}$ is its estimator. Assumption \ref{assu:delta} then reduces to $\sqrt{N}(\hat{\tau}-\tau)\xrightarrow{d}\mathcal{N}(0_{G\times1},\bar{\Sigma}_{\tau})$ and $\hat{\bar{\Sigma}}_{\tau}\xrightarrow{p}\bar{\Sigma}_{\tau}$. In contrast, when working with a random sample from a population, the estimated group weights are subject to sampling variability so that $\bar{\Sigma}_{w}\neq0$.

Given the consistency of $\hat{w}$ and $\hat{\tau}$ under Assumption \ref{assu:delta}, a natural estimator for $\rho_{b}=\sum^{G}_{g=1}w_{g}\exp(\tau_{g})-1$ is: 
\begin{equation}
\hat{\rho}_{b}=\sum^{G}_{g=1}\hat{w}_{g}\exp(\hat{\tau}_{g})-1.\label{eq:rho_b_hat}
\end{equation}
By the continuous mapping theorem, $\hat{\rho}_{b}\xrightarrow{p}\rho_{b}$, where $\rho_{b}$ is a lower bound for $\bar{\rho}$ according to Proposition \ref{prop:rhoplus}.

To conduct inference, I derive the asymptotic distribution of $\hat{\rho}_{b}$ using the delta method. Note that $\partial\rho_{b}/\partial w_{g}=\exp(\tau_{g})$ and $\partial\rho_{b}/\partial\tau_{g}=w_{g}\exp(\tau_{g}),$ so the gradient of $\rho_{b}$ with respect to $\delta=(w^{\prime},\tau^{\prime})^{\prime}$ is
\[
\nabla_{\delta}\rho_{b}=\left[\begin{array}{l}
\exp(\tau)\\
w\odot\exp(\tau)
\end{array}\right],
\]
where $\exp(\tau)=(\exp(\tau_{1}),\ldots,\exp(\tau_{G}))^{\prime}$, and $\odot$ denotes the element-wise product. With $\sqrt{N}\left(\hat{\delta}-\delta\right)\xrightarrow{d}\mathcal{N}(0,\bar{\Sigma}_{\delta})$, the delta method yields $\sqrt{N}(\hat{\rho}_{b}-\rho_{b})\xrightarrow{d}\mathcal{N}(0,\bar{\sigma}^{2}_{b})$, where
\begin{equation}
\bar{\sigma}^{2}_{b}=\left(\nabla_{\delta}\rho_{b}\right)^{\prime}\bar{\Sigma}_{\delta}\left(\nabla_{\delta}\rho_{b}\right)=\left[\begin{array}{l}
\exp(\tau)\\
w\odot\exp(\tau)
\end{array}\right]^{\prime}\bar{\Sigma}_{\delta}\left[\begin{array}{l}
\exp(\tau)\\
w\odot\exp(\tau)
\end{array}\right].\label{eq:Sigma_b}
\end{equation}
By the continuous mapping theorem, a consistent estimator of $\bar{\sigma}^{2}_{b}$ is 
\begin{equation}
\hat{\bar{\sigma}}^{2}_{b}=\left[\begin{array}{l}
\exp(\hat{\tau})\\
\hat{w}\odot\exp(\hat{\tau})
\end{array}\right]^{\prime}\hat{\bar{\Sigma}}_{\delta}\left[\begin{array}{l}
\exp(\hat{\tau})\\
\hat{w}\odot\exp(\hat{\tau})
\end{array}\right].\label{eq:hat_var_rhob}
\end{equation}
When weights are known, $\bar{\Sigma}_{\delta}=\diag\{0_{G\times G},\bar{\Sigma}_{\tau}\}$ and the asymptotic variance simplifies to $\bar{\sigma}^{2}_{b}=\left(w\odot\exp(\tau)\right)^{\prime}\bar{\Sigma}_{\tau}\left(w\odot\exp(\tau)\right)$.
\begin{thm}
\label{thm:rhob} Suppose Assumption \ref{assu:delta} holds, then $\sqrt{N}(\hat{\rho}_{b}-\rho_{b})\xrightarrow{d}\mathcal{N}\left(0,\bar{\sigma}^{2}_{b}\right)$, and $\hat{\bar{\sigma}}^{2}_{b}\xrightarrow{p}\bar{\sigma}^{2}_{b}$.
\end{thm}
The proof follows directly from the previous discussions. This theorem establishes the basis for hypothesis testing and confidence interval construction. To test $H_{0}:\rho_{b}=\rho_{0}$, we can use the $z$-statistic $z_{\rho}=\left(\hat{\rho}_{b}-\rho_{0}\right)/\sigma_{b}$, where $\sigma_{b}=\bar{\sigma}_{b}/\sqrt{N}$ denotes the asymptotic standard error of $\hat{\rho}_{b}$ and is usually estimated by $\hat{\bar{\sigma}}_{b}/\sqrt{N}$ in practice.

Theorem \ref{thm:rhob} implies that $\hat{\rho}_{b}$ consistently estimates $\rho_{b}$, which is a lower bound for the target ATE in percentage points. Thus $\hat{\rho}_{b}$ has an asymptotic lower-bound interpretation. This interpretation requires enough observations in each subgroup for $\hat{w}$ and $\hat{\tau}$ to be consistent. Valid inference further requires $(\hat{w}^{\prime},\hat{\tau}^{\prime})^{\prime}$ to be well approximated by its joint limiting normal distribution. With more subgroups, these requirements become increasingly difficult to meet. Analysts therefore should avoid defining overly disaggregated subgroups. In staggered difference-in-differences designs, adoption cohorts may be inappropriate subgroups when they are small. Monte Carlo simulations in Section \ref{sec:MC} show that the proposed estimation and inference procedures perform well with modest sample sizes. For applications with particularly small group sizes, the bias-corrected estimator in Section \ref{sec:rho_c} of the Online Appendix offers additional improvements.
\begin{rem}
\label{rem:imputation}An extreme case of over-disaggregation is to define each observation as its own subgroup. In this case, $\rho_{b}=\bar{\rho}$ because there is no within-subgroup heterogeneity. However, $\hat{\rho}_{b}$ is inconsistent for $\rho_{b}=\bar{\rho}$ because each $\tau_{g}$ is not consistently estimated. For example, in staggered difference-in-differences designs with log-transformed outcomes, imputation estimators such as \citet{borusyak_revisiting_2024} can produce individual-time specific log-point treatment effect estimates $\hat{\tau}_{it}$ for treated units. Under a strong parallel trends assumption, $\hat{\tau}_{it}$ can be unbiased for $\tau_{it}$ but remains inconsistent. Thus the corresponding $\hat{\rho}^{imp}_{b}$ constructed from averaging $\exp(\hat{\tau}_{it})-1$ over treated cells is inconsistent. Moreover, Jensen's inequality implies that $E[\exp(\hat{\tau}_{it})]\geqslant\exp[E(\hat{\tau}_{it})]=\exp(\tau_{it})$ because of estimation error. Thus $E(\hat{\rho}^{imp}_{b})\geqslant\rho_{b}=\bar{\rho}$, so $E(\hat{\rho}^{imp}_{b})$ is an upper bound for $\bar{\rho}$ in this case.
\end{rem}
The estimation and inference methods developed in this section apply broadly, as Assumption \ref{assu:delta} holds in many commonly used empirical settings. One example is the semi-log regression model. Suppose we are interested in the ATE in percentage points for the treatment group, which consists of $G$ sub-treatment groups. Thus, the target population $P$ is the population of treated units. Consider the model
\begin{equation}
\ln Y_{i}=X^{\prime}_{i}\beta+D^{\prime}_{i}\tau+\epsilon_{i},\label{eq:ols-1}
\end{equation}
where $D_{i}=(D^{(1)}_{i},\ldots,D^{(G)}_{i})^{\prime}$ is a vector of dummies for the sub-treatment groups. If the conditional independence assumption holds and treatment effects do not vary with $X$, then $\tau_{g}$ is the ATE in log points for sub-treatment group $g$. \footnote{Section \ref{sec:examples} of the Online Appendix provides details for this case, in which $E(\tau_{i}\mid D^{(g)}_{i}=1,X_{i})=\tau_{g}$. If instead $E(\tau_{i}\mid D^{(g)}_{i}=1,X_{i})$ varies with $X_{i}$, then OLS estimates of $\tau_{g}$ need not coincide with the subgroup average log-point effect. As illustrated by \citet{angrist_estimating_1998} in the single treatment case, when controls are saturated in $X_{i}$, the OLS treatment coefficient equals a convex weighted average of covariate-specific treatment effects. Thus, with controls, the target population $P$ may reflect the covariate weighting induced by the regression.}

Under standard regularity conditions, the OLS estimator $\hat{\tau}$ satisfies $\sqrt{N}\left(\hat{\tau}-\tau\right)\xrightarrow{d}\mathcal{N}(0_{G\times1},\bar{\Sigma}_{\tau})$ for some finite matrix $\bar{\Sigma}_{\tau}$, with the heteroskedasticity robust variance estimator $\hat{\bar{\Sigma}}_{\tau}$ satisfying $\hat{\bar{\Sigma}}_{\tau}\xrightarrow{p}\bar{\Sigma}_{\tau}$ (see, e.g., \citealp{hayashi_econometrics_2000}). For the estimation of $w$, let $T_{i}=\sum^{G}_{g=1}D^{(g)}_{i}$ be the treatment indicator, and suppose the probability of treatment is $p_{T}=E(T_{i})\in(0,1)$, and define $w_{g}=E(D^{(g)}_{i}=1|T_{i}=1)\in(0,1)$. Under random sampling, a natural and consistent estimator of $w_{g}$ is $\hat{w}_{g}=N_{g}/N_{T}$, where $N_{g}=\sum^{N}_{i=1}D^{(g)}_{i}$ is the sample size of subgroup $g$ and $N_{T}=\sum^{G}_{g=1}N_{g}$ is the total number of treated observations. Since $(N_{1},\ldots,N_{G})\sim Multinomial(N_{T},w)$ and $N_{T}/N\rightarrow p_{T}$, it follows that $\sqrt{N}\left(\hat{w}-w\right)\xrightarrow{d}\mathcal{N}(0_{G\times1},\bar{\Sigma}_{w})$, where $\bar{\Sigma}_{w}=p^{-1}_{T}\left(\diag(w)-ww^{\prime}\right)$ and its consistent estimator can be obtained by replacing $w$ with $\hat{w}$ and $p_{T}$ with $\hat{p}_{T}=N_{T}/N$ respectively. Moreover, it can be shown that $\left(\sqrt{N}(\hat{w}-w),\sqrt{N}(\hat{\tau}-\tau)\right)$ is jointly asymptotically normal with zero covariance. Consequently, $\sqrt{N}\left(\hat{\delta}-\delta\right)\xrightarrow{d}\mathcal{N}\left(0_{2G\times1},\bar{\Sigma}_{\delta}\right)$, where $\bar{\Sigma}_{\delta}=\diag\{\bar{\Sigma}_{w},\bar{\Sigma}_{\tau}\}$, so that Assumption \ref{assu:delta} holds in this setting. See Section \ref{sec:examples} of the Online Appendix for further details.

Another example is staggered difference-in-differences designs. Heterogeneity-robust estimators in these designs typically provide consistent and asymptotically normal estimates of subgroup-specific treatment effects and weights, thereby satisfying Assumption \ref{assu:delta}. \footnote{See, for example, Theorems 2 and 3 in \citet{callaway_differenceindifferences_2021} and Proposition 6 in \citet{sun_estimating_2021}.} The estimation and inference procedures for $\rho_{b}$ developed here therefore apply. Example 2 in Section \ref{sec:examples} of the Online Appendix illustrates this application to a semi-log difference-in-differences model with staggered adoption. It is a direct extension of model (\ref{eq:ols-1}) when subgroups are defined as cohort-time combinations and the target population $P$ may correspond to all treated units, a selected cohort, a selected event time, or a selected calendar time, depending on the aggregate estimand of interest.

\subsection{Partial Identification of $\bar{\rho}$ \label{subsec:Partial-Identification}}

This section applies results from \citet{fan_partial_2017} to derive sharp bounds for $\bar{\rho}$ under unrestricted treatment-effect heterogeneity. The analysis follows directly from Theorem 3.2 of \citet{fan_partial_2017}, which delivers sharp bounds when the relevant conditional marginal distributions of the potential outcomes and the distribution of the conditioning covariates are identified. I take these distributions as given and focus on the resulting bound formulas for $\bar{\rho}$.

Assume that the two potential outcomes $Y_{1},Y_{0}\in(0,\infty)$ with $E(Y_{1})<\infty$. Observe that the function $\mu(Y_{1},Y_{0})=(Y_{1}-Y_{0})/Y_{0}$ is right-continuous and strictly sub-modular as for $Y^{\prime}_{1}>Y_{1}$ and $Y^{\prime}_{0}>Y_{0}$, we have 
\[
\frac{Y_{1}}{Y_{0}}+\frac{Y^{\prime}_{1}}{Y^{\prime}_{0}}-\frac{Y^{\prime}_{1}}{Y_{0}}-\frac{Y_{1}}{Y^{\prime}_{0}}=\frac{Y_{1}-Y^{\prime}_{1}}{Y_{0}}+\frac{Y^{\prime}_{1}-Y_{1}}{Y^{\prime}_{0}}=-\frac{(Y^{\prime}_{1}-Y_{1})(Y^{\prime}_{0}-Y_{0})}{Y_{0}Y^{\prime}_{0}}<0.
\]
With $E\mu(Y_{1},\bar{y}_{0})$ finite for any $\bar{y}_{0}>0$ and $E\mu(0,Y_{0})=-1$ also finite, Theorem 2 of \citet{cambanis_inequalities_1976} applies. Let $F_{1o}(Y_{1})$ and $F_{0o}(Y_{0})$ denote the marginal cumulative distribution functions (CDFs) of $Y_{1}$ and $Y_{0}$ respectively. The sharp bounds for $\mu(Y_{1},Y_{0})=(Y_{1}-Y_{0})/Y_{0}$ correspond to the Fr\'echet--Hoeffding bounds, with the lower bound $\rho_{L}=\int^{1}_{0}\left[F^{-1}_{1o}(u)/F^{-1}_{0o}(u)\right]du-1,$ and the upper bound $\rho_{U}=\int^{1}_{0}\left[F^{-1}_{1o}(1-u)/F^{-1}_{0o}(u)\right]du-1$, provided both integrals exist and at least one is finite.

The bounds can be tightened under the conditional independence assumption. Let $X$ denote observed covariates with support $\mathcal{X}\subset\mathcal{R}^{k_{x}}$. Under the assumption that $Y_{1},Y_{0}$ are jointly independent of treatment $T$ conditional on $X$ and that $0<E(T=1|X=x)<1$ for all $x\in\mathcal{X}$, we can identify the conditional marginal distributions of $Y_{1}$ and $Y_{0}$, denoted as $F_{1o}(Y_{1}|X)$ and $F_{0o}(Y_{0}|X)$ respectively. Following Theorem 3.2 of \citet{fan_partial_2017}, the sharp lower and upper bounds of $E\left[(Y_{1}-Y_{0})/Y_{0}\right]$ become 
\[
\rho^{\ast}_{L}=E\left[\int^{1}_{0}\frac{F^{-1}_{1o}(u|X)}{F^{-1}_{0o}(u|X)}du\right]-1,
\]
 and 
\begin{eqnarray*}
\rho^{\ast}_{U} & = & E\left[\int^{1}_{0}\frac{F^{-1}_{1o}(1-u|X)}{F^{-1}_{0o}(u|X)}du\right]-1,
\end{eqnarray*}
respectively, if $\rho^{\ast}_{L}$ and $\rho^{\ast}_{U}$ both exist and at least one of them is finite.

\section{Monte Carlo Experiment\label{sec:MC}}

This section reports Monte Carlo experiments that study the finite-sample performance of the proposed estimator $\hat{\rho}_{b}$ and the associated inference procedure in semi-log regression models. The simulation design follows the framework introduced at the end of Section \ref{sec:Estimation-and-Inference} and described in detail in Section \ref{sec:examples} of the Online Appendix.

I compare the performance of $\hat{\rho}_{b}$ with that of $\hat{\bar{\tau}}$ and $\hat{\rho}_{a}$, where $\hat{\bar{\tau}}=\sum^{G}_{g=1}\hat{w}_{g}\hat{\tau}_{g}$ is the estimator of $\bar{\tau}$, and $\hat{\rho}_{a}=\exp(\hat{\bar{\tau}})-1$ is the estimator of $\rho_{a}$. Under Assumption \ref{assu:delta}, $\hat{\bar{\tau}}\xrightarrow{p}\bar{\tau}$ and $\hat{\rho}_{a}\xrightarrow{p}\rho_{a}$ by the continuous mapping theorem. Standard delta-method calculations yield $\sqrt{N}(\hat{\bar{\tau}}-\bar{\tau})\xrightarrow{d}\mathcal{N}(0,\bar{\sigma}^{2}_{\bar{\tau}})$, where $\bar{\sigma}^{2}_{\bar{\tau}}=\nabla_{\delta}\bar{\tau}^{\prime}\bar{\Sigma}_{\delta}\nabla_{\delta}\bar{\tau}$ and $\nabla_{\delta}\bar{\tau}=(\tau^{\prime},w^{\prime})^{\prime}$. Similarly, $\sqrt{N}(\hat{\rho}_{a}-\rho_{a})\xrightarrow{d}\mathcal{N}(0,\bar{\sigma}^{2}_{a})$, where $\bar{\sigma}^{2}_{a}=\exp(2\bar{\tau})\bar{\sigma}^{2}_{\bar{\tau}}$. To assess the finite-sample performance of the inference method, I report two-sided $z$-tests for $H_{0}:\bar{\tau}=\rho_{0}$ based on $z_{\tau}=(\hat{\bar{\tau}}-\rho_{0})/\sigma_{\bar{\tau}}$, where $\sigma_{\bar{\tau}}=\bar{\sigma}_{\bar{\tau}}/\sqrt{N}$ is estimated by replacing each parameter with its corresponding estimate. Because $\rho_{a}=\exp(\bar{\tau})-1$, testing $H_{0}:\rho_{a}=\rho_{0}$ is equivalent to testing $H_{0}:\bar{\tau}=\ln(\rho_{0}+1)$. The corresponding statistic is $z_{a}=(\hat{\bar{\tau}}-\ln(\rho_{0}+1))/\sigma_{\bar{\tau}}$.

Because $\bar{\rho}\geqslant\rho_{b}\geqslant\rho_{a}\geqslant\bar{\tau}$, differences between $\hat{\rho}_{b}$ and the conventional estimators $\hat{\rho}_{a}$ and $\hat{\bar{\tau}}$ capture the extent to which $\hat{\rho}_{b}$ is closer to the target $\bar{\rho}$, regardless of whether $\bar{\rho}=\rho_{b}$. For ease of interpretation, I impose constant treatment effects within subgroups so that $\rho_{b}=\bar{\rho}$.

The simulations focus on the average treatment effect on the treated (ATT) in percentage points with $G=4$ sub-treatment groups. Data are generated according to the following semi-log model: $\ln Y_{i}=1+X_{i}+\sum^{4}_{g=1}D^{(g)}_{i}\tau_{g}+\epsilon_{i}$, where $D^{(g)}_{i}$ is the dummy for sub-treatment group $g$. The treatment probability is $p_{T}=0.8$, and each sub-treatment group has equal weight $w_{g}=0.25$ within the treated population. Consequently, the control group and each sub-treatment group each comprise $20\%$ of the total population. Observations are randomly assigned to groups according to these probabilities. Sample size $N$ is selected from \{20, 50, 100, 200, 500, 1000, 2000, 5000, $10^{4}$, $10^{5}$, $10^{6}$\}. The covariate $X_{i}$ and the error term $\epsilon_{i}$ are each i.i.d. $\mathcal{N}(0,1)$. Results using skew-normal errors are analogous and presented in Section \ref{sec:app_mc} of the Online Appendix. I examine two cases: a large heterogeneity case with $\tau=(\ln(0.68),\ln(0.84),\ln(1.16),\ln(1.32))^{\prime}$, implying $100(\rho_{1},\rho_{2},\rho_{3},\rho_{4})=(-32,-16,16,32)$, and a small heterogeneity case with $\tau=(\ln(0.92),\ln(0.96),\ln(1.04),\ln(1.08))^{\prime}$, implying $100(\rho_{1},\rho_{2},\rho_{3},\rho_{4})=(-8,-4,4,8)$. In both cases, the true value of $\bar{\rho}=\rho_{b}$ equals $0\%$. I estimate $\hat{\tau}$, $\hat{w}$, and $\hat{\bar{\Sigma}}_{\delta}$ as described at the end of Section \ref{sec:Estimation-and-Inference}, and then construct $\hat{\bar{\tau}}$, $\hat{\rho}_{a}$ and $\hat{\rho}_{b}$ together with their associated $z$-statistics.

\begin{table}[hp]
\small
\caption{Monte Carlo Results for Estimation and Inference on $\rho_b$: Normal Errors}
\label{tab:monteS0truew0}
\begin{center}
\hspace*{-10mm}
\begin{threeparttable}
\begin{tabular}{cccccccccccccc}
\hline \hline  & \multicolumn{6}{c}{Small Heterogeneity} & & \multicolumn{6}{c}{ Large Heterogeneity} \\ \cline{2-7} \cline{9-14}
& \multicolumn{3}{c}{Estimators} & & \multicolumn{2}{c}{ERR} & & \multicolumn{3}{c}{Estimators} & & \multicolumn{2}{c}{ERR} \\ \cline{2-4} \cline{6-7} \cline{9-11} \cline{13-14}
$ N $ & $ \hat{\bar{\tau}} $ & $ \hat{\rho}_a $  & $ \hat{\rho}_b $ & & $ z_{\tau} $ & $ z_{\rho} $ & &  $ \hat{\bar{\tau}} $ & $ \hat{\rho}_a $  & $ \hat{\rho}_b $ & & $ z_{\tau} $ & $ z_{\rho} $ \\ \hline
 \multicolumn{14}{c}{\textit{True Values}} \\
      & $ -0.201 $ & $ -0.200 $  & $ 0 $ & & $ 5 $ & $ 5 $ & & $ -3.349 $ & $ -3.294 $  & $ 0 $ & & $ 5 $ & $ 5 $ \\
 \multicolumn{14}{c}{\textit{Estimates}} \\20 & -0.44 & 20.63 & 33.99 &  & 6.44 & 7.84 &  & -3.70 & 16.94 & 33.93 &  & 6.38 & 7.93  \\
 & (61.68) & (85.37) & (96.03) &  &  &  &  & (61.80) & (84.02) & (97.53) &  &  &   \\
50 & -0.12 & 7.01 & 11.45 &  & 5.35 & 5.87 &  & -3.33 & 3.76 & 11.44 &  & 5.59 & 6.12  \\
 & (37.09) & (41.61) & (43.48) &  &  &  &  & (37.40) & (40.69) & (44.06) &  &  &   \\
100 & -0.23 & 3.09 & 5.28 &  & 5.15 & 5.44 &  & -3.43 & -0.10 & 5.24 &  & 5.48 & 5.54  \\
 & (25.56) & (26.84) & (27.47) &  &  &  &  & (25.80) & (26.26) & (27.86) &  &  &   \\
200 & -0.13 & 1.48 & 2.65 &  & 5.14 & 5.14 &  & -3.38 & -1.75 & 2.55 &  & 5.49 & 5.18  \\
 & (17.90) & (18.31) & (18.55) &  &  &  &  & (17.96) & (17.78) & (18.68) &  &  &   \\
500 & -0.20 & 0.44 & 1.02 &  & 5.16 & 5.17 &  & -3.37 & -2.69 & 1.00 &  & 6.12 & 5.18  \\
 & (11.28) & (11.36) & (11.43) &  &  &  &  & (11.34) & (11.07) & (11.56) &  &  &   \\
1000 & -0.22 & 0.10 & 0.48 &  & 5.06 & 5.10 &  & -3.37 & -3.01 & 0.48 &  & 7.14 & 5.03  \\
 & (7.94) & (7.95) & (7.99) &  &  &  &  & (7.98) & (7.76) & (8.08) &  &  &   \\
2000 & -0.22 & -0.06 & 0.23 &  & 5.00 & 5.00 &  & -3.32 & -3.11 & 0.28 &  & 9.14 & 5.04  \\
 & (5.59) & (5.59) & (5.61) &  &  &  &  & (5.66) & (5.48) & (5.71) &  &  &   \\
5000 & -0.19 & -0.13 & 0.11 &  & 5.11 & 5.09 &  & -3.35 & -3.24 & 0.10 &  & 15.58 & 4.98  \\
 & (3.54) & (3.53) & (3.54) &  &  &  &  & (3.56) & (3.44) & (3.58) &  &  &   \\
$10^4$ & -0.20 & -0.17 & 0.04 &  & 4.94 & 4.88 &  & -3.35 & -3.26 & 0.05 &  & 26.29 & 4.89  \\
 & (2.48) & (2.48) & (2.49) &  &  &  &  & (2.50) & (2.42) & (2.52) &  &  &   \\
$10^5$ & -0.20 & -0.19 & 0.01 &  & 5.70 & 4.95 &  & -3.35 & -3.29 & 0.01 &  & 98.76 & 4.93  \\
 & (0.79) & (0.79) & (0.79) &  &  &  &  & (0.79) & (0.77) & (0.80) &  &  &   \\
$10^6$ & -0.20 & -0.20 & 0.00 &  & 12.50 & 4.98 &  & -3.35 & -3.29 & 0.00 &  & 100.00 & 4.94  \\
 & (0.25) & (0.25) & (0.25) &  &  &  &  & (0.25) & (0.24) & (0.25) &  &  &   \\
\hline \hline \end{tabular}
\begin{tablenotes}
\footnotesize \item 1. Entries report Monte Carlo means and standard deviations (in parentheses) for $\hat{\bar{\tau}}$, $\hat{\rho}_a$, $\hat{\rho}_b$, based on $ 10^5$ replications for each sample size $N$. 
Also reported are empirical rejection rates (ERRs, in percent) at the 5\% nominal level for two-sided tests of $H_0: \bar{\tau}=0$ using $z_\tau$, and of $H_0: \rho_b=0$ using $z_{\rho}$. 
 \item 2. The left panel uses $\tau=(\ln(0.92),\ln(0.96),\ln(1.04),\ln(1.08))'$, and the right panel uses $\tau=(\ln(0.68),\ln(0.84),\ln(1.16),\ln(1.32))'$. The error terms $ \epsilon $ are i.i.d. $ \mathcal N(0,1) $. 
 \item 3. All values are multiplied by 100. True parameter values are in the first row.
\end{tablenotes}
\end{threeparttable}
\end{center}
\end{table}

The left and right panels of Table \ref{tab:monteS0truew0} present results for small and large heterogeneity respectively, with all values multiplied by 100. The first three columns in each panel report the Monte Carlo means and standard deviations (in parentheses) of $\hat{\bar{\tau}}$, $\hat{\rho}_{a}$ and $\hat{\rho}_{b}$ across 100,000 repetitions, with true values provided at the top of each panel. The last two columns in each panel report empirical rejection rates of two-sided $z$-tests for $H_{0}:\bar{\tau}=0$ using $z_{\tau}$ and for $H_{0}:\rho_{b}=0$ using $z_{\rho}$ at the 5\% level. The empirical rejection rates equal one minus the coverage rate for $0$ of the confidence intervals for $\bar{\tau}$ and $\rho_{b}$ respectively. The $z$-test for $\rho_{a}=0$ is omitted, because $\rho_{a}=0$ is equivalent to $\bar{\tau}=0$ given that $\rho_{a}=\exp(\bar{\tau})-1$.

The proposed estimator $\hat{\rho}_{b}$ and associated $z$-test perform well even for modest sample sizes. For both designs with large and small treatment effect heterogeneity, when $N\geqslant500$, that is when each subgroup has $100$ observations, the bias of $\hat{\rho}_{b}$ is below 1 percentage point. The empirical rejection rate of the test based on $z_{\rho}$ is close to nominal, ranging from 4.9\% to 5.5\% when $N\geqslant100$.

The estimators $\hat{\bar{\tau}}$ and $\hat{\rho}_{a}$ provide reasonable approximations for $\rho_{b}$ in the case of small heterogeneity, as the true values of $\bar{\tau}$ and $\rho_{a}$ are around $-0.2\%$, close to $\rho_{b}=0$. The approximation bias converges to $-0.2\%$ as the sample size increases. Since $\bar{\tau}$ is close to, but not exactly, zero in this design, tests of $H_{0}:\bar{\tau}=0$ using $z_{\tau}$ reject at rates close to the $5\%$ nominal level for $N\leqslant10^{5}$ but reject more often in very large samples, e.g., $12.5\%$ when $N=10^{6}$. Intuitively, as $N$ grows, $\sigma_{\bar{\tau}}=\bar{\sigma}_{\bar{\tau}}/\sqrt{N}$ shrinks, causing the relative bias $(\rho_{b}-\bar{\tau})/\sigma_{\bar{\tau}}$ to grow.

In the large heterogeneity design, the approximation bias of $\hat{\bar{\tau}}$ and $\hat{\rho}_{a}$ is much larger and converges to $-3.3\%$. Inference based on $z_{\tau}$ can therefore differ substantially from inference based on $z_{\rho}$. For example, the rejection rate based on $z_{\tau}$ is 9.14\% when $N=2000$. Overall, the simulations show that reporting $\rho_{b}$ can matter both for point estimates and for inference, especially when dealing with large treatment effect heterogeneity or very large sample sizes.

\section{Empirical Applications\label{sec:Empirical-Applications}}

This section uses two empirical applications to illustrate the proposed estimation and inference procedures. The applications further assess how $\hat{\rho}_{b}$ compares with conventional estimators $\hat{\bar{\tau}}$ and $\hat{\rho}_{a}$ in practice.

\subsection{Exporting and Firm Performance}

The first application uses a semi-log regression model to analyze the randomized experiment in \citet{atkin_exporting_2017}, which examines the causal impact of exporting on the performance of small rug producers in Egypt.

\citet{atkin_exporting_2017} recruited two samples of firms satisfying specific criteria. Each sample was divided into strata based on rug type and loom size. Within each stratum, firms were randomly assigned to the treatment group and offered access to export orders, though not all took up. This replication focuses on the first three follow-up rounds of the joint sample of 219 firms producing duble rugs. The sample comprises 79 firms from 4 strata in sample 1, and 140 firms from 4 strata in sample 2. Of the 74 firms in the treatment group, 47 took up the export opportunity.

Given randomization within strata, it is reasonable to expect treatment effects to differ across strata. I therefore replace the single treatment dummy with dummies for eight sub-treatment groups, modifying Eq. (1) in \citet{atkin_exporting_2017} to:

\begin{equation}
\ln(Y_{igt})=\alpha+\sum^{8}_{g=1}\tau_{g}D^{(g)}_{i}+\gamma\ln(Y_{ig0})+\delta_{g}+\theta_{t}+\epsilon_{igt},\label{eq:atk}
\end{equation}
where $Y_{igt}$ is the outcome of firm $i$ in stratum $g$ in round $t$, $Y_{ig0}$ is the baseline outcome, and $\delta_{g}$ and $\theta_{t}$ capture stratum and round fixed effects. Outcomes include various measures of profits in Table V and determinants of profits like price and output in Table VI of \citet{atkin_exporting_2017}. The sub-treatment group dummy $D^{(g)}_{i}=1$ if firm $i$ is in stratum $g$ and assigned to the treatment group. Since not all treatment firms took up the treatment, $\tau_{g}$ represents the average intent to treat (ITT) effect in log points for sub-treatment group $g$. Standard errors are clustered at the firm level.

  \begin{table}[h!] \begin{center} \small \caption{Replication of Atkin et al. (2017)} \label{tab:Atk}  \begin{threeparttable}  \begin{tabular}{cccc} \hline \hline \hspace{10pt} Outcome \hspace{10pt} &\hspace{25pt} $\hat{\bar{\tau}} \hspace{25pt} $ & $ \hspace{25pt} \hat{\rho}_a \hspace{25pt} $ & $\hspace{25pt} \hat{\rho}_b \hspace{25pt} $ \\ \hline 
\multicolumn{4}{c}{\textit{Table V: Impact of Exporting on Firm Profits}} \\
Direct&22.2(6.2)&24.9(7.8)&26.5(7.9)\\
Reported&19.3(6.7)&21.3(8.1)&22.7(8.7)\\
Constructed&16.3(6.9)&17.7(8.1)&19.9(8.4)\\
Hypothetical&36.0(11.2)&43.3(16.0)&53.4(17.5)\\
\multicolumn{4}{c}{\textit{Table V: Impact of Exporting on Firm Profits (per owner hour)}} \\
Direct&16.8(5.6)&18.3(6.7)&20.3(7.2)\\
Reported&15.6(6.1)&16.9(7.1)&18.6(8.1)\\
Constructed&13.2(6.1)&14.1(7.0)&15.7(7.4)\\
Hypothetical&26.5(7.0)&30.3(9.1)&31.5(10.0)\\
\multicolumn{4}{c}{\textit{Table VI: Impact of Exporting on Components of Profits}} \\
Output price&39.4(10.7)&48.3(15.8)&57.5(16.8)\\
Output&-24.8(9.4)&-22.0(7.3)&-19.6(7.5)\\
Hours worked&4.2(2.3)&4.3(2.5)&4.8(2.4)\\
No. of looms&-1.1(4.1)&-1.1(4.1)&-0.8(4.3)\\
Warp thread balls&12.8(5.0)&13.7(5.7)&15.6(6.0)\\
 \hline \hline \end{tabular} \begin{tablenotes} \footnotesize
\item 1. Estimates and standard errors (in parentheses) are reported for $\hat{\bar{\tau}}$, $\hat{\rho}_a=\exp(\hat{\bar{\tau}})-1$ and $\hat{\rho}_b$. All values are multiplied by 100. \item 2. Panel 1 replicates Columns (1), (3), (5), and (7) from Panel A of Table V in Atkin et al. (2017). Panel 2 replicates Panel B of the same table. Panel 3 replicates Columns (1), (3), (5), (9), and (11) of Table VI in Atkin et al. (2017).
\end{tablenotes} \end{threeparttable} \end{center} \end{table}

Table \ref{tab:Atk} presents estimates and standard errors (in parentheses) for $\hat{\bar{\tau}}$, $\hat{\rho}_{a}=\exp(\hat{\bar{\tau}})-1$, and $\hat{\rho}_{b}=\sum\hat{w}_{g}\exp(\hat{\tau}_{g})-1$. The first panel replicates Columns (1), (3), (5), and (7) of Table V Panel A in \citet{atkin_exporting_2017}, showing the impact on various profit measures. The second panel replicates Panel B, showing the impact on profits per owner hour. The third panel replicates Columns (1), (3), (5), (9), and (11) of Table VI, showing the impact on various determinants of profits.

For most outcomes, $\hat{\bar{\tau}}$ differs substantially from $\hat{\rho}_{b}$. The difference between $\hat{\rho}_{a}$ and $\hat{\rho}_{b}$ is about 1 to 2 percentage points for many outcomes. For example, the impact of exporting on direct profits per owner hour is $18.3\%$ (95\% CI: $[5.3\%,31.4\%]$) using $\hat{\rho}_{a}$,\footnote{The confidence interval for $\rho_{a}$ here is $CI_{a}=\left[\exp\left(\hat{\bar{\tau}}+\hat{\sigma}_{\bar{\tau}}z_{\alpha/2}\right)-1,\exp\left(\hat{\bar{\tau}}-\hat{\sigma}_{\bar{\tau}}z_{\alpha/2}\right)-1\right]$, where $z_{\alpha/2}$ is the $\alpha/2$ quantile of the standard normal distribution.} and $20.3\%$ (95\% CI: $[6.2\%,34.5\%]$) using $\hat{\rho}_{b}$, differing by 2 percentage points. The differences are larger for hypothetical profits and output price, e.g., $\hat{\rho}_{a}=43.3\%$ (95\% CI: $[11.9\%,74.6\%]$) and $\hat{\rho}_{b}=53.4\%$ (95\% CI: $[19.0\%,87.8\%]$) for hypothetical profits.

Overall, the difference between $\hat{\rho}_{b}$ and $\hat{\rho}_{a}$ is modest for most outcomes in this application. However, the results also show that when treatment effect heterogeneity is large, meaningful gaps can arise between conventional estimates and $\hat{\rho}_{b}$.

\subsection{Impact of Water and Sewerage Infrastructures on Child Mortality}

The second application illustrates the applicability of my methodology to staggered difference-in-differences designs. It extends the analysis in Table 2 of \citet{alsan_watersheds_2019}, examining the impact of water and sewerage infrastructures on child mortality in the United States.

\citet{alsan_watersheds_2019} explore the staggered rollout of water and sewerage infrastructures across municipalities in Massachusetts between 1892 and 1903. They utilize panel data of 60 Massachusetts municipalities from 1880 to 1920. Their main finding is that neither water nor sewerage infrastructure alone affected child mortality, but the combination of both systems reduced child mortality by $0.266$ log points.

Based on this finding, I define treatment as the joint presence of both water and sewerage infrastructures, while controlling for the separate effects of having only one system. A municipality is in cohort $c$ if it first obtained both sewerage and safe water infrastructures in year $c$. Table A1 in \citet{alsan_watersheds_2019} documents the intervention year for each municipality.

I estimate $\tau$ using the following model 
\begin{equation}
\ln(Y_{it})=\alpha_{i}+\beta_{t}+X^{\prime}_{it}\gamma+\sum_{c\neq\infty}\sum_{r\neq-1}D_{it}(c,r)\tau(c,r)+\epsilon_{it},\label{eq:stagdid-1}
\end{equation}
where $Y_{it}$ is the under-5 mortality rate in municipality $i$ in year $t$. The treatment indicator $D_{it}(c,r)$ equals 1 if municipality $i$ belongs to cohort $c$ and year $t$ is its $r$-th post-treatment period.\footnote{Following \citet{alsan_watersheds_2019}, event time is grouped into 2-year bins except for period $-1$. For example, $r=0$ for event times $0$ and 1, $r=1$ for event times $2$ and $3$, $r=5$ for event times $10$ and above. Similarly, $r=-2$ for event times $-3$ and $-2$, and $r=-6$ for event times $\leqslant-10$.} The specification includes municipality fixed effects $\alpha_{i}$ and year fixed effects $\beta_{t}$. Following \citet{alsan_watersheds_2019}, the model controls for time-varying covariates $X_{it}$,\footnote{The use of a TWFE regression with time-varying covariates in difference-in-differences designs requires a conditional parallel trends assumption. The covariates $X_{it}$ should be strictly exogenous and should not themselves be affected by treatment, so as to avoid the \textquotedblleft bad control\textquotedblright{} problem. In addition, the conditional expectation function of the untreated potential outcome must be linear in $X_{it}$ \citep{wooldridge_twoway_2025}.} including (1) $Wateralone$, indicator for having only a water system but no sewerage system in year $t$, (2) $Seweragealone$, indicator for having only a sewerage system but no safe water system; (3) ``log of population density, percentage foreign-born, percentage male, and the percentage of females employed in manufacturing'' and ``municipality specific linear trends''. I estimate $\tau(c,r)$ by OLS and cluster the standard errors at the municipality level following \citet{alsan_watersheds_2019}. Weights for aggregate effects use sample shares, as described in detail in Example 2 of Section \ref{sec:examples} of the Online Appendix. I set $\bar{\Sigma}_{w}=0$ since the sample represents the complete population of Massachusetts municipalities.

  \begin{table}[h!] \begin{center} \small \caption{Replication of Table 2 in Alsan and Goldin (2019)} \label{tab:AG2019}  \begin{threeparttable}  \begin{tabular}{cccc} \hline \hline &\hspace{25pt} $\hat{\bar{\tau}} \hspace{25pt} $ & $ \hspace{25pt} \hat{\rho}_a \hspace{25pt} $ & $\hspace{25pt} \hat{\rho}_b \hspace{25pt} $ \\ \hline 
\multicolumn{4}{c}{\textit{ATT for All Treated Units}} \\
&-51.4(10.0)&-40.2(6.0)&-37.8(6.0)\\
\multicolumn{4}{c}{\textit{ATT by Event Time}} \\
0&-24.0(7.6)&-21.3(6.0)&-19.7(6.0)\\
1&-30.0(8.5)&-25.9(6.3)&-20.0(6.7)\\
2&-39.0(9.2)&-32.3(6.2)&-29.2(6.6)\\
3&-39.2(10.3)&-32.5(7.0)&-30.0(7.1)\\
4&-54.7(11.3)&-42.1(6.5)&-41.0(6.7)\\
5&-63.2(11.6)&-46.8(6.1)&-46.1(6.2)\\
\multicolumn{4}{c}{\textit{ATT by Cohort}} \\
1898&-49.1(11.8)&-38.8(7.2)&-37.7(7.2)\\
1899&-52.3(33.0)&-40.7(19.5)&-40.2(19.1)\\
1901&-7.0(8.5)&-6.7(7.9)&-4.9(8.0)\\
1902&-96.6(8.5)&-61.9(3.2)&-61.3(3.2)\\
1903&-28.9(10.7)&-25.1(8.0)&-19.5(8.3)\\
 \hline \hline \end{tabular} \begin{tablenotes} \footnotesize
\item 1. This table replicates Table 2 of Alsan and Goldin (2019), who study the impact of water and sewerage infrastructure on child mortality rates in Massachusetts in the 1890s. \item 2. Estimates and standard errors (in parentheses) are reported for $\hat{\bar{\tau}}$, $\hat{\rho}_a=\exp(\hat{\bar{\tau}})-1$ and $\hat{\rho}_b$. All values are multiplied by 100. \item 3. Panel 1 presents ATT for all treated units. Panel 2 presents the ATT for event time $ 0,1,\ldots,5 $. Panel 3 presents the ATT for each cohort.
\end{tablenotes} \end{threeparttable} \end{center} \end{table}

Results are summarized in Table \ref{tab:AG2019}, with all values multiplied by 100. From top to bottom are estimates and standard errors of $\hat{\bar{\tau}}$, $\hat{\rho}_{a}=\exp(\hat{\bar{\tau}})-1$, and $\hat{\rho}_{b}$ for all treated units, each event time, and each cohort. Due to their large absolute values, $\hat{\bar{\tau}}$ differs dramatically from $\hat{\rho}_{a}$ and $\hat{\rho}_{b}$ throughout. The conventional estimator $\hat{\rho}_{a}$ and the proposed estimator $\hat{\rho}_{b}$ also differ meaningfully in some cases, with gaps reaching up to $5.9$ percentage points. For all treated units, $\hat{\rho}_{a}$ ($-40.2\%$) overstates the mortality reduction compared to the proposed estimator $\hat{\rho}_{b}$ ($-37.8\%$), a difference of $2.4$ percentage points. The gap between the two estimators varies across event times: relatively small for later periods ($\hat{\rho}_{a}=-46.8\%$ vs. $\hat{\rho}_{b}=-46.1\%$ at event time 5) but substantial for early periods, particularly event time 1 where $\hat{\rho}_{a}=-25.9\%$ and $\hat{\rho}_{b}=-20.0\%$. Cohort-specific estimates show even greater variation: while cohorts 1898, 1899, and 1902 exhibit relatively small differences between $\hat{\rho}_{a}$ and $\hat{\rho}_{b}$, cohort 1903 shows a substantial gap, with $\hat{\rho}_{a}=-25.1\%$ and $\hat{\rho}_{b}=-19.5\%$.

As in the first application, $\hat{\rho}_{a}$ is often close to $\hat{\rho}_{b}$, so conventional estimators can be adequate in many cases. However, when treatment effects vary substantially across subgroups, the resulting gap can be economically meaningful. Because the magnitude of this gap is difficult to gauge ex ante, reporting $\hat{\rho}_{b}$ therefore provides a practical diagnostic and a way to account for observable heterogeneity when interpreting ATEs in percentage points.

Using the imputation estimator of \citet{borusyak_revisiting_2024}, I obtain $\hat{\rho}^{imp}_{b}=-9.9\%$ for all treated units. As discussed in Remark \ref{rem:imputation}, this realized value is not a consistent estimator of the ATE in percentage points for all treated units, although it has an upper-bound interpretation in expectation in this setting. See Section \ref{subsec:Imputation-AG} of the Online Appendix for details.

\section{Conclusion\label{sec:Conclusion}}

This paper highlights the importance of correctly estimating and interpreting ATEs in percentage points when treatment effects are heterogeneous. Differences between ATEs in log points and in percentage points can be substantial, especially when treatment effects are large or vary significantly. Failing to account for these differences may lead to misinterpretation of results and potentially misguided policy recommendations.

My proposed methods provide researchers with tools for improved estimation and inference of ATEs in percentage points in the presence of treatment effect heterogeneity. By accounting for heterogeneity across observable subgroups, the methods yield point identification when treatment effects are constant within subgroups and informative lower bounds otherwise. The methods can be applied to a variety of settings such as semi-log regression models. They are particularly relevant for research designs such as staggered difference-in-differences models, where treatment effect heterogeneity may be present.

By applying the methods to empirical studies on how exporting affects firm productivity and how the combination of water and sewerage infrastructure affects children's mortality, I demonstrate how accounting for heterogeneity can affect the interpretation of ATEs in percentage points in practice. As empirical studies continue to grapple with complex treatment effect patterns, tools like those presented in this paper will be useful for accurate estimation and inference.

\subsection*{Declaration of competing interest}

The author declares that she has no known competing financial interests or personal relationships that could have appeared to influence the work reported in this paper.

\subsection*{Funding}

This work was supported by the Young Scientists Fund of the National Natural Science Foundation of China (Grant No. 72403212), the Young Scientists Fund of the Fujian Natural Science Foundation (Grant No. 2023J05010) and the General Program of the National Natural Science Foundation of China (Grant Nos. 72573138 and 72573135).

\subsection*{Declaration of generative AI and AI-assisted technologies in the manuscript preparation process}

During the preparation of this work, the author used Claude and ChatGPT in order to improve the readability and language of the paper, as well as to assist with coding in Stata and R. After using these tools/services, the author reviewed and edited the content as needed and takes full responsibility for the content of the published article.

\subsection*{Acknowledgments}

I am grateful to the editor, the associate editor, and two anonymous referees for their constructive comments and suggestions that substantially improved this paper. I thank Ingmar Prucha, Guido Kuersteiner, and participants at the NSFC 2025 Young Scientists Fund (Category C, Economics) Awardees\textquoteright{} Meeting, the 10th Young Econometricians in Asia-Pacific (YEAP) Annual Meeting, and the lunch seminar at the Department of Public Economics, Xiamen University, for valuable comments. This work was supported by the High Performance Computing Platform of the Key Laboratory of Econometrics (Xiamen University), Ministry of Education. All remaining errors are my own.

\bibliographystyle{elsarticle-harv}
\bibliography{loglinear3}
\pagebreak{}

\section*{Online Appendix for ``Estimation and Inference on Average Treatment Effect in Percentage Points under Heterogeneity''}

\setcounter{section}{0}\setcounter{page}{1}\setcounter{table}{0}\setcounter{equation}{0}\setcounter{lem}{0}\setcounter{assumption}{0}\setcounter{thm}{0}\setcounter{footnote}{0}\setcounter{rem}{0}\setcounter{cor}{0}\setcounter{prop}{0}

\renewcommand{\theequation}{O.\arabic{equation}}
\renewcommand{\thetable}{O.\arabic{table}}
\renewcommand{\thesection}{O.\arabic{section}}
\renewcommand{\theassumption}{O.\arabic{assumption}}
\renewcommand{\thelem}{O.\arabic{lem}}
\renewcommand{\thethm}{O.\arabic{thm}}
\renewcommand{\therem}{O.\arabic{rem}}
\renewcommand{\thecor}{O.\arabic{cor}}

This online appendix includes four sections. Section \ref{sec:examples} demonstrates how Assumption \ref{assu:delta} in the main text can be satisfied in semi-log regression models and staggered difference-in-differences models. Section \ref{sec:normal heteogeneity} discusses estimation and inference of the ATE in percentage points when potential outcomes are jointly normal. Section \ref{sec:rho_c} presents a bias-corrected estimator for $\rho_{b}$. Section \ref{sec:app_mc} reports additional Monte Carlo simulations and empirical results.

\section{Estimators Satisfying Assumption \ref{assu:delta}\label{sec:examples}}

\subsection{Example 1: Semi-log Regression Model}

This example shows how $\hat{w}$ and $\hat{\tau}$ satisfying Assumption \ref{assu:delta} in the main text can be obtained as OLS estimators in semi-log regression models, applicable in research designs such as (conditional) randomized controlled trials.

The objective is to estimate the average treatment effect on the treated (ATT) in percentage points. The population of interest is the entire treatment group, comprising $G$ sub-treatment groups. In this setting, $D^{(g)}_{i}=1$ if $i$ is in sub-treatment group $g$ and $0$ otherwise. The indicator of treatment is $T_{i}=\sum^{G}_{g=1}D^{(g)}_{i}=\mathbf{1}^{\prime}_{G}D_{i}$, where $D_{i}=(D^{(1)}_{i},\ldots,D^{(G)}_{i})^{\prime}$ and $\mathbf{1}_{G}$ is a $G\times1$ vector of ones.

I obtain $\hat{\tau}$ as the OLS estimator from a semi-log regression model. The observed log-transformed outcome is $\ln Y_{i}=\ln Y_{0i}(1-T_{i})+\ln Y_{1i}T_{i}=\ln Y_{0i}+T_{i}\tau_{i}$. Let $X$ denote a $k_{x}\times1$ vector of observed covariates including a constant term, and assume that $E\left(\ln Y_{0i}\left|X_{i},D_{i}\right.\right)=X^{\prime}_{i}\beta$. Assume that $E(\tau_{i}|D^{(g)}_{i}=1,X_{i})=\tau_{g}$ so $E(\tau_{i}|D_{i},X_{i})=D^{\prime}_{i}\tau$ and $E(T_{i}\tau_{i}|D_{i},X_{i})=D^{\prime}_{i}\tau$. A semi-log linear regression model that allows for treatment effect heterogeneity is specified as
\begin{equation}
\ln Y_{i}=X^{\prime}_{i}\beta+D^{\prime}_{i}\tau+\epsilon_{i},\label{eq:ols}
\end{equation}
where $\epsilon_{i}=\ln Y_{i}-E[\ln Y_{i}|X_{i},D_{i}]$ and by construction $E(\epsilon_{i}|X_{i},D_{i})=0$. \footnote{The specification can also be justified by other assumptions and alternative estimation approaches are available, see \citet{goldsmith-pinkham_contamination_2024} for a comprehensive discussion of the estimation of heterogeneous treatment effects in linear regressions.}
\begin{rem}
The specification in (\ref{eq:ols}) imposes no restriction on the heterogeneity of $\ln Y_{0i}$ and permits various group structures. The control group can be homogeneous or partitioned into $G$ subgroups corresponding to treatment subgroups. In the latter case, $X_{i}$ includes subgroup dummies, and $D^{(g)}_{i}$ represents subgroup-treatment interactions. For instance, consider a sample comprising $G$ cities, each containing both control and treatment units. Here, $X_{i}$ would include city dummies, and $D^{(g)}_{i}$ would be city-treatment interactions, capturing city-specific effects. In this specific scenario, the estimator of $\tau$ is analogous to the interaction-weighted estimator in \citet{gibbons_broken_2019}, differing only in the log-transformed outcome.

The matrix form of model (\ref{eq:ols}) is $\mathbf{lnY}=\mathbf{X}\beta+\mathbf{D}\tau+\epsilon,$ where $\mathbf{\mathbf{lnY}}=\left(\ln Y_{1},\ldots,\ln Y_{N}\right)^{\prime}$, $\boldsymbol{\mathbf{X}}=\left(X_{1},\ldots,X_{N}\right)^{\prime}$, $\mathbf{D}=(D_{1},\ldots,D_{N})^{\prime}$, and $\epsilon=\left(\epsilon_{1},\ldots,\epsilon_{N}\right)^{\prime}$. Let $\tilde{X}_{i}=\left(X^{\prime}_{i},D^{\prime}_{i}\right)^{\prime}$, and $\tilde{\mathbf{X}}=(\tilde{X}_{1},\ldots\tilde{X}_{N})^{\prime}=[\mathbf{X},\mathbf{D}]$. The OLS estimator of $(\beta^{\prime},\tau^{\prime})^{\prime}$ is given by $(\hat{\beta}^{\prime},\hat{\tau}^{\prime})^{\prime}=(\tilde{\mathbf{X}}^{\prime}\tilde{\mathbf{X}})^{-1}\tilde{\mathbf{X}}^{\prime}\mathbf{lnY}$. The heteroscedasticity robust estimator for the asymptotic variance of $\sqrt{N}(\hat{\beta}^{\prime},\hat{\tau}^{\prime})^{\prime}$ is 
\begin{equation}
\hat{\bar{\Sigma}}_{(\beta,\tau)}=N(\tilde{\mathbf{X}}^{\prime}\tilde{\mathbf{X}})^{-1}\left(\sum^{N}_{i=1}\tilde{X}_{i}\tilde{X}^{\prime}_{i}\hat{\epsilon}^{2}_{i}\right)(\tilde{\mathbf{X}}^{\prime}\tilde{\mathbf{X}})^{-1},\label{eq:est_Sigma_btau}
\end{equation}
where $\hat{\epsilon}_{i}$ is the OLS residual. Alternative covariance estimators, including cluster-robust variance estimators, also satisfy Assumption \ref{assu:delta} provided that they are consistent.

I next discuss the estimation of $w$. The population share of sub-treatment group $g$ in the whole treatment group is $w_{g}=E(D^{(g)}_{i}=1|T_{i}=1)$, where $T_{i}=\sum^{G}_{g=1}D^{(g)}_{i}$ is the indicator that $i$ belongs to the treatment group. Suppose we have a sample of $N$ individuals, with $N_{T}=\sum^{N}_{i=1}T_{i}$ in the treatment group. Define $N_{g}=\sum^{N}_{i=1}D^{(g)}_{i}$ as the size of sub-treatment group $g$ so that $N_{T}=\sum^{G}_{g=1}N_{g}$. The estimator of $w_{g}$ is $\hat{w}_{g}=N_{g}/N_{T}$ and of $w$ is $\hat{w}=(N_{1}/N_{T},\ldots,N_{G}/N_{T})^{\prime}$. If weights are known and $w=\hat{w}$, the asymptotic variance of $\sqrt{N}\left(\hat{w}-w\right)$ is $\bar{\Sigma}_{w}=0$. On the other hand, if the observed sample is drawn from a population, then generally $\hat{w}\neq w$ and $\bar{\Sigma}_{w}\neq0$. To estimate $\bar{\Sigma}_{w}$ for the latter case, denote the diagonal matrix with elements of vector $\alpha$ on the main diagonal as $\diag(\alpha)$. Let 
\begin{equation}
\hat{\bar{\Sigma}}_{w}=N/N_{T}\left(\diag(\hat{w})-\hat{w}\hat{w}^{\prime}\right)\label{eq:est_Sigma_w}
\end{equation}
 be the estimator for $\bar{\Sigma}_{w}$. As shown in the proof of Lemma \ref{lem:wtauestimation}, $\hat{w}_{g}$ is the OLS estimator for $w_{g}$ in the regression $D^{(g)}_{i}=w_{g}T_{i}+v_{i}$ and hence a function of $\mathbf{D}=(D_{1},\ldots,D_{N})^{\prime}$, and $\hat{\bar{\Sigma}}_{w}/N$ is the heteroscedasticity robust covariance matrix of $\hat{w}$. The following lemma states that Assumption \ref{assu:delta} holds for the OLS estimators $\hat{w}$ and $\hat{\tau}$.
\end{rem}
\begin{lem}
\label{lem:wtauestimation}Assume that: (i) The data $\left\{ \left(Y_{i},X_{i},D_{i}\right),i=1,\ldots,N\right\} $ is an i.i.d. sample drawn from the population. (ii) Treatment probability is $E(\mathbf{1}^{\prime}D_{i}=1)=p_{T}$, where $0<p_{T}<1$. (iii) $E\left(\tilde{X}_{i}\tilde{X}^{\prime}_{i}\right)$ is finite and nonsingular. (iv) $E(\epsilon_{i}|\tilde{X}_{i})=0$ and $E(\epsilon^{2}_{i}\tilde{X}_{i}\tilde{X}^{\prime}_{i})=\Sigma_{x}$, where $\Sigma_{x}$ is finite and positive definite. Then Assumption \ref{assu:delta} in the main text holds, specifically: (a)
\[
\sqrt{N}\left(\begin{array}{c}
\hat{w}-w\\
\hat{\tau}-\tau
\end{array}\right)\xrightarrow{d}\mathcal{N}\left[0_{2G\times1},\left(\begin{array}{cc}
\bar{\Sigma}_{w} & 0\\
0 & \bar{\Sigma}_{\tau}
\end{array}\right)\right],
\]
where $\bar{\Sigma}_{w}=p^{-1}_{T}\left(\diag(w)-ww^{\prime}\right)$ is positive semi-definite, $\bar{\Sigma}_{\tau}$ is the lower-right $G\times G$ submatrix of $\bar{\Sigma}_{\left(\beta,\tau\right)}=\left[E\left(\tilde{X}_{i}\tilde{X}^{\prime}_{i}\right)\right]^{-1}\Sigma_{x}\left[E\left(\tilde{X}_{i}\tilde{X}^{\prime}_{i}\right)\right]^{-1}$ and is positive definite.

(b) Let $\hat{\bar{\Sigma}}_{\tau}$ be the lower-right $G\times G$ submatrix of $\hat{\bar{\Sigma}}_{(\beta,\tau)}$ in (\ref{eq:est_Sigma_btau}), and $\hat{\bar{\Sigma}}_{w}$ be defined in (\ref{eq:est_Sigma_w}), then $\hat{\bar{\Sigma}}_{w}-\bar{\Sigma}_{w}\xrightarrow{p}0$ and $\hat{\bar{\Sigma}}_{\tau}-\bar{\Sigma}_{\tau}\xrightarrow{p}0$ as $N\rightarrow\infty$.
\end{lem}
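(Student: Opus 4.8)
The plan is to derive separate asymptotically linear (influence-function) representations for $\hat{\tau}$ and $\hat{w}$, stack them, and invoke a single multivariate CLT. The block-diagonal structure in part (a) then reduces to the claim that the cross-covariance of the two influence functions vanishes, and part (b) is handled afterward by routine LLN and continuous-mapping arguments. I would organize everything around two mean-zero influence functions $\psi_{\tau,i}$ and $\psi_{w,i}$, constructed as follows.

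For $\hat{\tau}$ I would apply standard OLS theory to model (\ref{eq:ols}). Under (i) i.i.d. sampling, (iii) finiteness and nonsingularity of $E(\tilde{x}_i\tilde{x}_i')$, and (iv) $E(\epsilon_i\mid\tilde{x}_i)=0$ with $E(\epsilon_i^2\tilde{x}_i\tilde{x}_i')=\Sigma_x$ finite and positive definite, the estimator admits $\sqrt{N}\big((\hat{\beta}',\hat{\tau}')'-(\beta',\tau')'\big)=[E(\tilde{x}_i\tilde{x}_i')]^{-1}N^{-1/2}\sum_i\tilde{x}_i\epsilon_i+o_p(1)$, so that $\psi_{\tau,i}$ is the lower $G$ coordinates of $[E(\tilde{x}_i\tilde{x}_i')]^{-1}\tilde{x}_i\epsilon_i$, and $\Var(\psi_{\tau,i})$ is the lower-right block of the sandwich $\bar{\Sigma}_{(\beta,\tau)}$. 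Positive definiteness of $\bar{\Sigma}_\tau$ follows because $\bar{\Sigma}_{(\beta,\tau)}$ is positive definite (a congruence transform of the positive definite $\Sigma_x$ by the nonsingular $[E(\tilde{x}_i\tilde{x}_i')]^{-1}$) and every principal submatrix of a positive definite matrix is positive definite.

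For $\hat{w}$ I would first confirm the claim in the preceding remark that $\hat{w}_g=N_g/N_S$ equals the no-intercept OLS slope of $d_i^{(g)}$ on $s_i$: since $s_i\in\{0,1\}$ and $s_i d_i^{(g)}=d_i^{(g)}$ (membership implies treatment), the slope is $\sum_i d_i^{(g)}/\sum_i s_i=N_g/N_S$, a function of $D$ alone. Writing $\hat{w}_g=\bar{d}^{(g)}/\bar{s}$ with $\bar{d}^{(g)}=N^{-1}\sum_i d_i^{(g)}$ and $\bar{s}=N^{-1}\sum_i s_i$, and using $E(d_i^{(g)})=p_S w_g$ and $E(s_i)=p_S$, the delta method gives the influence function $\psi_{w,g,i}=p_S^{-1}\big(d_i^{(g)}-w_g s_i\big)$, which depends only on $d_i$. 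A short moment computation with $d_i^{(g)}d_i^{(h)}=\delta_{gh}d_i^{(g)}$, $s_i d_i^{(g)}=d_i^{(g)}$, and $s_i^2=s_i$ yields $\Var(\psi_{w,i})=p_S^{-1}(\diag(w)-ww')=\bar{\Sigma}_w$, which is positive semi-definite (it is $p_S^{-1}$ times a multinomial covariance matrix) and singular, since $\mathbf{1}'w=1$ places $\mathbf{1}$ in its null space.

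The crux, which I expect to be the main obstacle, is the vanishing off-diagonal block, i.e. the asymptotic independence of $\hat{\tau}$ and $\hat{w}$. Here I would exploit that $\psi_{w,g,i}=p_S^{-1}(d_i^{(g)}-w_g s_i)$ is a measurable function of $d_i$, and $d_i$ is a subvector of $\tilde{x}_i$. The cross-covariance block is a fixed matrix times $E[\tilde{x}_i\epsilon_i\psi_{w,i}']$, and for each $g$,
\[
E\big[\tilde{x}_i\epsilon_i\,(d_i^{(g)}-w_g s_i)\big] = E\big[\tilde{x}_i(d_i^{(g)}-w_g s_i)\,E(\epsilon_i\mid\tilde{x}_i)\big] = 0
\]
by the law of iterated expectations and $E(\epsilon_i\mid\tilde{x}_i)=0$, so the block vanishes. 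With both influence functions mean zero and square integrable, the CLT applied to $N^{-1/2}\sum_i(\psi_{\tau,i}',\psi_{w,i}')'$ delivers joint normality with the claimed block-diagonal covariance, proving (a). For (b), consistency of $\hat{\bar{\Sigma}}_\tau$ is the usual White/sandwich argument: $N^{-1}\tilde{X}'\tilde{X}\xrightarrow{p}E(\tilde{x}_i\tilde{x}_i')$ and $N^{-1}\sum_i\tilde{x}_i\tilde{x}_i'\hat{\epsilon}_i^2\xrightarrow{p}\Sigma_x$ (replacing $\hat{\epsilon}_i$ by $\epsilon_i$ via consistency of $\hat{\beta},\hat{\tau}$ and bounding the remainder), followed by continuous mapping and extraction of the lower-right block; consistency of $\hat{\bar{\Sigma}}_w$ follows from $\hat{w}\xrightarrow{p}w$, $N/N_S\xrightarrow{p}p_S^{-1}$, and continuous mapping applied to (\ref{eq:est_Sigma_w}).
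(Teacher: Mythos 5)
Your proposal is correct and follows essentially the same route as the paper's proof: both derive the influence function $p_S^{-1}(d_i - w s_i)$ for $\hat{w}$ (the paper via the no-intercept OLS representation $\hat{w}-w = N_S^{-1}\sum_i s_i v_i$ and a Slutsky step replacing $N/N_S$ by $p_S^{-1}$, you via the delta method on $\bar{d}^{(g)}/\bar{s}$ — the same computation), stack it with the standard OLS influence function for $\hat{\tau}$, and kill the cross-covariance by iterated expectations using that $d_i - w s_i$ is a function of $\tilde{x}_i$ and $E(\epsilon_i\mid\tilde{x}_i)=0$. The remaining steps (positive definiteness of $\bar{\Sigma}_\tau$ as a principal block of a congruence of $\Sigma_x$, the multinomial-covariance form of $\bar{\Sigma}_w$, and the routine White/continuous-mapping arguments for part (b)) also match the paper.
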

\begin{rem}
In the special case when $G=1$, $\hat{w}=w=1$ and $\hat{\bar{\Sigma}}_{w}=\bar{\Sigma}_{w}=0$, which satisfies Assumption \ref{assu:delta} trivially. When $w$ is known, the lemma simplifies to $\sqrt{N}(\hat{\tau}-\tau)\xrightarrow{d}\mathcal{N}(0_{G\times1},\bar{\Sigma}_{\tau})$ and $\hat{\bar{\Sigma}}_{\tau}-\bar{\Sigma}_{\tau}\xrightarrow{p}0$ for the regression in (\ref{eq:ols}).

Proof of Lemma \ref{lem:wtauestimation} is in Section \ref{subsec:proof_wtauestimation}. Combining Lemma \ref{lem:wtauestimation} with Theorem \ref{thm:rhob} in the main text, under the conditions in Lemma \ref{lem:wtauestimation}, $\sqrt{N}(\hat{\rho}_{b}-\rho_{b})\xrightarrow{d}\mathcal{N}(0,\bar{\sigma}^{2}_{b})$ with
\begin{align*}
\bar{\sigma}^{2}_{b} & =\exp(\tau)^{\prime}\bar{\Sigma}_{w}\exp(\tau)+\left(w\odot\exp(\tau)\right)^{\prime}\bar{\Sigma}_{\tau}\left(w\odot\exp(\tau)\right).
\end{align*}
\end{rem}

\subsection{Example 2: Staggered Difference-in-differences Design\label{subsec:Example_2_Staggered}}

This example considers a staggered difference-in-differences design where multiple groups start receiving treatment at different times. I focus on the case with a never-treated group and no treatment exit and use panel data for illustration.

Let $c(i)$ denote the period when individual $i$ first receives treatment, with $c(i)=\infty$ for never-treated individuals. Cohort $c$ is defined as the group of individuals with $c(i)=c$, and the event time $r_{it}=t-c(i)$ is time relative to treatment. Each $c,r$ combination is viewed as a subgroup, with $\tau(c,r)$ denoting the ATE on the log-transformed outcome for cohort $c$ at event time $r$. Treatment effects may be heterogeneous across $c$ and $r$, i.e., $\tau(c,r)\neq\tau(c^{\prime},r^{\prime})$ if $c\neq c^{\prime}$ or $r\neq r^{\prime}$. Heterogeneity-robust estimators for the model typically involve estimating $\tau(c,r)$ and the corresponding weights $w(c,r)$, then computing the ATE in log points $\bar{\tau}=\sum_{c}\sum_{r}w(c,r)\tau(c,r)$ for a target population $P$. As discussed in Section \ref{sec:ATE-in-Percentage} in the main text, $\bar{\tau}$ and $\exp(\bar{\tau})-1$ differ from the ATE in percentage points, $\bar{\rho}$. However, the estimators of $w$ and $\tau$ in these studies usually satisfy Assumption \ref{assu:delta} in the main text, and therefore form the basis for the estimation and inference methods of $\rho_{b}$ in this paper.

Below I discuss estimators of $\tau(c,r)$ and $w(c,r)$ in \citet{sun_estimating_2021}, which are adaptations of the OLS estimators in Example 1 to staggered difference-in-differences settings, if we view each $(c,r)$ combination as a sub-treatment group. For the estimation of $\tau(c,r)$, consider the model
\begin{equation}
\ln(Y_{it})=\alpha_{i}+\beta_{t}+X^{\prime}_{it}\gamma+\sum_{c\neq\infty}\sum_{r\neq-1}D_{it}(c,r)\tau(c,r)+\epsilon_{it},\label{eq:stagdid}
\end{equation}
where $Y_{it}$ is the outcome of individual $i$ at time $t$, $\alpha_{i}$, $\beta_{t}$ and $X_{it}$ are individual fixed effects, time fixed effects and control variables respectively. The dummy variable $D_{it}(c,r)=1$ if individual $i$ belongs to cohort $c$ and $r_{it}=r$. The never-treated group ($c=\infty$) serves as the control group, and the period immediately preceding treatment ($r=-1$) serves as the base period. Under assumptions of conditional parallel trends and no anticipation, OLS estimation of Eq. (\ref{eq:stagdid}) yields estimators of $\tau(c,r)$ and the covariance matrix that satisfy Assumption \ref{assu:delta}, see Propositions 5 and 6 in \citet{sun_estimating_2021}.\footnote{With time-varying covariates, additional assumptions are needed. See Footnote 5 in the main text for a discussion.}

The estimation of $w(c,r)$ depends on the type of ATE of interest. Define $\mathcal{A}\subset\{(c,r):c\neq\infty,r\neq-1\}$ as a subset of the sub-treatment groups ($c,r$ combinations) for which to calculate the ATE in percentage points. This choice of $\mathcal{A}$ determines the target population $P$: $P$ is the population of observations in the selected $(c,r)$ cells. Following \citet{callaway_differenceindifferences_2021}, I define $\mathcal{A}=\{(c,r):c\neq\infty,r=r^{\ast}\}$ for ATE of event time $r^{\ast}$, with $r^{\ast}$ possibly negative to allow for event studies. Also, $\mathcal{A}=\{(c,r):c=c^{\ast},r\geqslant0\}$ for ATT of cohort $c^{\ast}$, $\mathcal{A}=\{(c,r):c+r=t,r\geqslant0\}$ for ATT of calendar time $t$, and $\mathcal{A}=\{(c,r):c\neq\infty,r\geqslant0\}$ for ATT of all treated units. Let $w^{\mathcal{A}}(c,r)$ be the share of subgroup $(c,r)$ in $\mathcal{A}$, with $w^{\mathcal{A}}(c,r)=0$ if $(c,r)\notin\mathcal{A}$. Then $w^{\mathcal{A}}(c^{\ast},r^{\ast})=E\left[(c,r)=(c^{\ast},r^{\ast})|(c,r)\in\mathcal{A}\right]$.

Treating each $(c,r)$ combination in $\mathcal{A}$ as a sub-treatment group $g$ and estimating $w^{\mathcal{A}}(c,r)$ as in Example 1: 
\begin{equation}
\hat{w}^{\mathcal{A}}(c,r)=\sum_{i}\sum_{t}D_{it}(c,r)/\left(\sum_{(c^{\ast},r^{\ast})\in\mathcal{A}}\sum_{i}\sum_{t}D_{it}(c^{\ast},r^{\ast})\right),\label{eq:wA}
\end{equation}
where the numerator is the sample size of cohort $c$ at event time $r$, and the denominator is the sample size of all units in set $\mathcal{A}$. Equivalently, $\hat{w}^{\mathcal{A}}(c,r)$ is the OLS estimator in the regression $D_{it}(c,r)=w^{\mathcal{A}}(c,r)T^{\mathcal{A}}_{it}+v_{it},$ where $T^{\mathcal{A}}_{it}=\sum_{(c,r)\in\mathcal{A}}D_{it}(c,r)$ is the indicator that individual $i$ in period $t$ belongs to set $\mathcal{A}$. Let $\hat{w}^{\mathcal{A}}$ be the vector of $\hat{w}^{\mathcal{A}}(c,r)$ for all $(c,r)$ combinations in $\mathcal{A}$. If the sample in $\mathcal{A}$ is randomly drawn from the population, the asymptotic covariance matrix of $\sqrt{N}\left(\hat{w}^{\mathcal{A}}-w^{\mathcal{A}}\right)$ is estimated as $\hat{\bar{\Sigma}}^{\mathcal{A}}_{w}=N/N_{\mathcal{A}}\left(\diag(\hat{w}^{\mathcal{A}})-\hat{w}^{\mathcal{A}}\hat{w}^{\mathcal{A}\prime}\right)$, where $N_{\mathcal{A}}$ is the size of $\mathcal{A}$. The estimators $\hat{w}^{\mathcal{A}}$ and $\hat{\bar{\Sigma}}^{\mathcal{A}}_{w}$ satisfy Assumption \ref{assu:delta} by Lemma \ref{lem:wtauestimation} and are identical to those in \citet{sun_estimating_2021}. When $\mathcal{A}=\{(c,r):c=c^{\ast},r\geqslant0\}$, i.e., when we are interested in the ATT of cohort $c^{\ast}$, then $\bar{\Sigma}^{\mathcal{A}}_{w}=0$ as weights for each period are equal and known.

\subsection{Proof of Lemma \ref{lem:wtauestimation} \label{subsec:proof_wtauestimation}}
\begin{proof}
For the estimator $\hat{\tau}$, observe that
\[
\left(\begin{array}{c}
\hat{\beta}\\
\hat{\tau}
\end{array}\right)-\left(\begin{array}{c}
\beta\\
\tau
\end{array}\right)=(\tilde{\mathbf{X}}^{\prime}\tilde{\mathbf{X}})^{-1}\tilde{\mathbf{X}}^{\prime}\epsilon=\left(\frac{1}{N}\sum^{N}_{i=1}\tilde{X}_{i}\tilde{X}^{\prime}_{i}\right)^{-1}\left(\frac{1}{N}\sum^{N}_{i=1}\tilde{X}_{i}\epsilon_{i}\right).
\]
By the law of large numbers, $N^{-1}\sum^{N}_{i=1}\tilde{X}_{i}\tilde{X}^{\prime}_{i}\xrightarrow{p}E(\tilde{X}_{i}\tilde{X}^{\prime}_{i})$, which is a finite nonsingular matrix. Denote the bottom $G$ rows of $(N^{-1}\sum^{N}_{i=1}\tilde{X}_{i}\tilde{X}^{\prime}_{i})^{-1}$ and $\left[E(\tilde{X}_{i}\tilde{X}^{\prime}_{i})\right]^{-1}$ as $C_{D}$ and $\bar{C}_{D}$ respectively. Consequently, $C_{D}\xrightarrow{p}\bar{C}_{D}$, where $\bar{C}_{D}$ is a finite constant matrix. As a result, 
\begin{align}
\sqrt{N}\left(\hat{\tau}-\tau\right) & =C_{D}\frac{1}{\sqrt{N}}\sum^{N}_{i=1}\tilde{X}_{i}\epsilon_{i}=\left(C_{D}-\bar{C}_{D}\right)\frac{1}{\sqrt{N}}\sum^{N}_{i=1}\tilde{X}_{i}\epsilon_{i}+\frac{1}{\sqrt{N}}\sum^{N}_{i=1}\bar{C}_{D}\tilde{X}_{i}\epsilon_{i}.\label{eq:tauhat_sum-1}
\end{align}

For the estimator $\hat{w}$, $w_{g}=E(D^{(g)}_{i}=1|T_{i}=1)$ leads to $E(D^{(g)}_{i}|T_{i})=w_{g}T_{i}$. Define $v^{(g)}_{i}=D^{(g)}_{i}-E(D^{(g)}_{i}|T_{i})$ so that $E(v^{(g)}_{i}|T_{i})=0$ by construction. This yields: 
\begin{equation}
D^{(g)}_{i}=w_{g}T_{i}+v^{(g)}_{i},g=1,\ldots,G.\label{eq:ols_w-1}
\end{equation}
The OLS estimator of $w_{g}$ is
\begin{equation}
\hat{w}_{g}=\left(\sum^{N}_{i=1}T^{2}_{i}\right)^{-1}\left(\sum^{N}_{i=1}T_{i}D^{(g)}_{i}\right)=N_{g}/N_{T},\label{eq:what-1}
\end{equation}
recalling that $D^{(g)}_{i}$ are mutually exclusive dummies that sum up to $T_{i}$ so that $T^{2}_{i}=T_{i}$ and $T_{i}D^{(g)}_{i}=D^{(g)}_{i}$.

Plugging (\ref{eq:ols_w-1}) into (\ref{eq:what-1}), we obtain
\[
\hat{w}_{g}-w_{g}=\frac{1}{N_{T}}\sum^{N}_{i=1}T_{i}v^{(g)}_{i},g=1,\ldots,G,
\]
and hence with $v_{i}=(v^{(1)}_{i},\ldots,v^{(G)}_{i})^{\prime}$, we have $\hat{w}-w=N^{-1}_{T}\sum^{N}_{i=1}T_{i}v_{i}$ and 
\begin{align}
\sqrt{N}(\hat{w}-w) & =\left(\frac{N}{N_{T}}-\frac{1}{p_{T}}+\frac{1}{p_{T}}\right)\frac{1}{\sqrt{N}}\sum^{N}_{i=1}T_{i}v_{i}\nonumber \\
 & =\left(\frac{N}{N_{T}}-\frac{1}{p_{T}}\right)\frac{1}{\sqrt{N}}\sum^{N}_{i=1}T_{i}v_{i}+\frac{1}{\sqrt{N}}\sum^{N}_{i=1}\frac{1}{p_{T}}T_{i}v_{i}.\label{eq:what_sum-1}
\end{align}
Equation (\ref{eq:ols_w-1}) implies $D_{i}=wT_{i}+v_{i}$, hence $T_{i}v_{i}=T_{i}(D_{i}-wT_{i})=D_{i}-wT_{i}$ as $T^{2}_{i}=T_{i}$ and $T_{i}D_{i}=D_{i}$. Note also that $D_{i}D^{\prime}_{i}=\diag(D_{i})$, $E(T_{i})=p_{T}$, and $E(D_{i})=p_{T}w$, thus
\begin{align}
E\left[(T_{i}v_{i})(T_{i}v_{i})^{\prime}\right] & =E\left[\left(D_{i}-wT_{i}\right)\left(D_{i}-wT_{i}\right)^{\prime}\right]\nonumber \\
 & =E\left[\diag(D_{i})-wD^{\prime}_{i}-D_{i}w^{\prime}+T_{i}ww^{\prime}\right]\nonumber \\
 & =p_{T}\diag(w)-p_{T}ww^{\prime}-p_{T}ww^{\prime}+p_{T}ww^{\prime}\nonumber \\
 & =p_{T}\left(\diag(w)-ww^{\prime}\right).\label{eq:cov_sv-1}
\end{align}

Combining (\ref{eq:tauhat_sum-1}) and (\ref{eq:what_sum-1}) yields 
\begin{equation}
\sqrt{N}\left(\begin{array}{c}
\hat{w}-w\\
\hat{\tau}-\tau
\end{array}\right)=\frac{1}{\sqrt{N}}\sum^{N}_{i=1}\left(\begin{array}{c}
p^{-1}_{T}T_{i}v_{i}\\
\bar{C}_{D}\tilde{X}_{i}\epsilon_{i}
\end{array}\right)+\left(\begin{array}{c}
\left(\frac{N}{N_{T}}-\frac{1}{p_{T}}\right)\frac{1}{\sqrt{N}}\sum^{N}_{i=1}T_{i}v_{i}\\
\left(C_{D}-\bar{C}_{D}\right)\frac{1}{\sqrt{N}}\sum^{N}_{i=1}\tilde{X}_{i}\epsilon_{i}
\end{array}\right).\label{eq:joint-1}
\end{equation}
For the first term on the right-hand side, first note that $E\left(\bar{C}_{D}\tilde{X}_{i}\epsilon_{i}\right)=0$ and $E\left(p^{-1}_{T}T_{i}v_{i}\right)=0$ as $E(\epsilon_{i}|\tilde{X}_{i})=0$ and $E(v_{i}|T_{i})=0$. Next, $E(\epsilon^{2}_{i}\tilde{X}_{i}\tilde{X}^{\prime}_{i})=\Sigma_{x}$ by assumption, $E(T^{2}_{i}v_{i}v^{\prime}_{i})=\left(\diag(w)-ww^{\prime}\right)p_{T}$ from (\ref{eq:cov_sv-1}), and $E\left(\tilde{X}_{i}\epsilon_{i}T_{i}v^{\prime}_{i}\right)=E_{\tilde{X}}\left[E\left(\tilde{X}_{i}\epsilon_{i}T_{i}v^{\prime}_{i}|\tilde{X}_{i}\right)\right]=E_{\tilde{X}}\left[\tilde{X}_{i}E(\epsilon_{i}|\tilde{X}_{i})T_{i}v^{\prime}_{i}\right]=0$ as $T_{i}=\mathbf{1}^{\prime}_{G}D_{i}$ and $v_{i}=D_{i}-wT_{i}$ are both functions of $D_{i}$ and hence of $\tilde{X}_{i}=[X^{\prime}_{i},D^{\prime}_{i}]^{\prime}$. By the central limit theorem, 
\begin{equation}
\frac{1}{\sqrt{N}}\sum^{N}_{i=1}\left(\begin{array}{c}
p^{-1}_{T}T_{i}v_{i}\\
\bar{C}_{D}\tilde{X}_{i}\epsilon_{i}
\end{array}\right)\xrightarrow{d}\mathcal{N}\left[0_{2G\times1},\left(\begin{array}{cc}
\bar{\Sigma}_{w} & 0\\
0 & \bar{\Sigma}_{\tau}
\end{array}\right)\right],\label{eq:conv_d-1}
\end{equation}
where $\bar{\Sigma}_{w}=p^{-1}_{T}\left[\diag(w)-ww^{\prime}\right]$, and $\bar{\Sigma}_{\tau}=\bar{C}_{D}\Sigma_{x}\bar{C}^{\prime}_{D}$. Since $\bar{C}_{D}$ is the bottom $G$ rows of $\left[E(\tilde{X}_{i}\tilde{X}^{\prime}_{i})\right]^{-1}$, $\bar{\Sigma}_{\tau}=\bar{C}_{D}\Sigma_{x}\bar{C}^{\prime}_{D}$ is the lower-right $G\times G$ submatrix of $\bar{\Sigma}_{(\beta,\tau)}=\left[E(\tilde{X}_{i}\tilde{X}^{\prime}_{i})\right]^{-1}\Sigma_{x}\left[E(\tilde{X}_{i}\tilde{X}^{\prime}_{i})\right]^{-1}$. Note that $\bar{\Sigma}_{\left(\beta,\tau\right)}$ is positive definite as $E(\tilde{X}_{i}\tilde{X}^{\prime}_{i})$ is finite and nonsingular, and $\Sigma_{x}$ is positive definite. Thus $\bar{\Sigma}_{\tau}$ is positive definite, as for any nonzero $G\times1$ vector $\alpha$, let $\tilde{\alpha}=(0^{\prime}_{k_{x}\times1},\alpha^{\prime})^{\prime}$, then $\tilde{\alpha}^{\prime}\bar{\Sigma}_{(\beta,\tau)}\tilde{\alpha}=\alpha^{\prime}\bar{\Sigma}_{\tau}\alpha>0$.

For the second term on the R.H.S of (\ref{eq:joint-1}), first note that following (\ref{eq:conv_d-1}), $N^{-1/2}\sum^{N}_{i=1}\tilde{X}_{i}\epsilon_{i}\xrightarrow{d}\mathcal{N}(0,\Sigma_{x})$, $N^{-1/2}\sum^{N}_{i=1}T_{i}v_{i}\xrightarrow{d}\mathcal{N}\left(0,p_{T}\left(\diag(w)-ww^{\prime}\right)\right)$. Next, since $T_{i}=D^{\prime}_{i}\mathbf{1}_{G}$ is i.i.d. across $i$ and $E(T_{i})=p_{T}$, by the law of large numbers $N_{T}/N=\sum^{N}_{i=1}T_{i}/N\xrightarrow{p}p_{T}>0$, hence $\left(N/N_{T}-1/p_{T}\right)\xrightarrow{p}0$. Also, $C_{D}-\bar{C}_{D}\xrightarrow{p}0$. By Slutsky's theorem, the second term on the right-hand side of (\ref{eq:joint-1}) converges to 0 in distribution and hence in probability. Taken together, 
\begin{equation}
\sqrt{N}\left(\begin{array}{c}
\hat{w}-w\\
\hat{\tau}-\tau
\end{array}\right)\xrightarrow{d}\mathcal{N}\left[0_{2G\times1},\left(\begin{array}{cc}
\bar{\Sigma}_{w} & 0\\
0 & \bar{\Sigma}_{\tau}
\end{array}\right)\right].\label{eq:joit_wtau-1}
\end{equation}

With $0<p_{T}<1$ and $0<w_{g}\leqslant1$ for each included group, the matrix $\bar{\Sigma}_{w}=p^{-1}_{T}\left[\diag(w)-ww^{\prime}\right]$ is finite. It remains to show $\diag(w)-ww^{\prime}$ is positive semi-definite. Let $\alpha=(a_{1},\ldots,a_{G})^{\prime}$ be a nonzero $G\times1$ vector, and $\bar{a}=w^{\prime}\alpha$. We have
\begin{align*}
\alpha^{\prime}\left(\diag(w)-ww^{\prime}\right)\alpha & =\sum^{G}_{g=1}w_{g}a^{2}_{g}-\bar{a}^{2}=\sum^{G}_{g=1}w_{g}(a_{g}-\bar{a}+\bar{a})^{2}-\bar{a}^{2}\\
 & =\sum^{G}_{g=1}w_{g}(a_{g}-\bar{a})^{2}\geqslant0.
\end{align*}
Hence $\bar{\Sigma}_{w}$ is positive semi-definite.

For part (b) of the lemma, it is easy to see that $\hat{\bar{\Sigma}}_{(\beta,\tau)}-\bar{\Sigma}_{(\beta,\tau)}\xrightarrow{p}0$ following the standard theory for heteroscedasticity-robust standard errors, see e.g., Proposition 2.4 in \citet{hayashi_econometrics_2000}. Since $\hat{\bar{\Sigma}}_{\tau}$ and $\bar{\Sigma}_{\tau}$ are the lower-right $G\times G$ submatrices of $\hat{\bar{\Sigma}}_{(\beta,\tau)}$ and $\bar{\Sigma}_{(\beta,\tau)}$ respectively, we have $\hat{\bar{\Sigma}}_{\tau}\xrightarrow{p}\bar{\Sigma}_{\tau}.$ Furthermore, since $\hat{w}\xrightarrow{p}w$ and $N_{T}/N\xrightarrow{p}p_{T}>0$, $\hat{\bar{\Sigma}}_{w}=N/N_{T}\left(\diag(\hat{w})-\hat{w}\hat{w}^{\prime}\right)\xrightarrow{p}\bar{\Sigma}_{w}=p^{-1}_{T}\left(\diag(w)-ww^{\prime}\right)$.

To see that $\hat{\bar{\Sigma}}_{w}$ is the heteroscedasticity robust estimator of $\bar{\Sigma}_{w}$, observe that $\sqrt{N}(\hat{w}-w)=N^{1/2}N^{-1}_{T}\sum^{N}_{i=1}T_{i}v_{i}$, so the heteroscedasticity robust estimator of $\bar{\Sigma}_{w}$ is $N/N^{2}_{T}\sum^{N}_{i=1}T^{2}_{i}\hat{v}_{i}\hat{v}^{\prime}_{i},$ where $\hat{v}_{i}=D_{i}-\hat{w}T_{i}$ is the OLS residual. Following analogous argument for (\ref{eq:cov_sv-1}), 
\begin{align*}
T^{2}_{i}\hat{v}_{i}\hat{v}^{\prime}_{i} & =T^{2}_{i}\left(D_{i}-\hat{w}T_{i}\right)\left(D_{i}-\hat{w}T_{i}\right)^{\prime}\\
 & =\diag(D_{i})-\hat{w}D^{\prime}_{i}-D_{i}\hat{w}^{\prime}+T_{i}\hat{w}\hat{w}^{\prime}.
\end{align*}
Since $N^{-1}_{T}\sum^{N}_{i=1}D_{i}=\hat{w}$, and $\sum^{N}_{i=1}T_{i}=N_{T}$, the heteroscedasticity robust variance estimator is
\begin{align*}
\frac{N}{N^{2}_{T}}\sum^{N}_{i=1}T^{2}_{i}\hat{v}_{i}\hat{v}^{\prime}_{i} & =\frac{N}{N_{T}}\left[\frac{1}{N_{T}}\sum^{N}_{i=1}\left(\diag(D_{i})-\hat{w}D^{\prime}_{i}-D_{i}\hat{w}^{\prime}+T_{i}\hat{w}\hat{w}^{\prime}\right)\right]\\
 & =\frac{N}{N_{T}}\left(\diag(\hat{w})-\hat{w}\hat{w}^{\prime}-\hat{w}\hat{w}^{\prime}+\hat{w}\hat{w}^{\prime}\right)\\
 & =\frac{N}{N_{T}}\left(\diag(\hat{w})-\hat{w}\hat{w}^{\prime}\right),
\end{align*}
which is exactly $\hat{\bar{\Sigma}}_{w}$.
\end{proof}

\section{\label{sec:normal heteogeneity}Estimation and Inference Under Normally Distributed Heterogeneity}

This section develops estimation and inference procedures for $\bar{\rho}$ allowing for treatment effect heterogeneity both across and within subgroups. It includes two special cases: (i) $G=1$, representing pure idiosyncratic heterogeneity, and (ii) constant effects within groups, in which case, the estimator reduces to $\hat{\rho}_{b}$.

As discussed before, point identification of $\bar{\rho}$ is generally impossible without further distributional assumptions. The estimation and inference methods here rely crucially on the assumption that the log-transformed potential outcomes follow a joint normal distribution within each subgroup. The rationale behind this assumption is that a key motivation for log-transforming the outcome variable is to reduce skewness and achieve a more normal distribution. The joint normality assumption on the potential outcomes may be too restrictive in practice, in which case the estimation and inference methods for $\rho_{b}$ discussed in the main text provide an easy-to-calculate lower-bound approach.

I impose the following distributional assumption.
\begin{assumption}
\label{assu:normal}For $g=1,\ldots,G$, 
\[
\left.\left(\begin{array}{c}
\ln Y_{1i}\\
\ln Y_{0i}
\end{array}\right)\right|D^{(g)}_{i}=1\sim\mathcal{N}\left(\left(\begin{array}{c}
\mu_{1g}\\
\mu_{0g}
\end{array}\right),\left(\begin{array}{cc}
s^{2}_{1g} & r_{g}s_{1g}s_{0g}\\
r_{g}s_{1g}s_{0g} & s^{2}_{0g}
\end{array}\right)\right),
\]
with $\mu_{1g}-\mu_{0g}=\tau_{g}$, $0<s^{2}_{1g}<\infty$, $0<s^{2}_{0g}<\infty$.
\end{assumption}
The parameter $r_{g}$ represents the Pearson correlation coefficient between $\ln Y_{1i}$ and $\ln Y_{0i}$ in subgroup $g$, satisfying $-1\leqslant r_{g}\leqslant1$. Under Assumption \ref{assu:normal}, the individual log-point effects follow $\tau_{i}|D^{(g)}_{i}=1\sim\mathcal{N}(\tau_{g},\varsigma^{2}_{g})$, where $\varsigma^{2}_{g}=s^{2}_{1g}+s^{2}_{0g}-2r_{g}s_{1g}s_{0g}$.

According to the mean of log-normal distributions, if $x\sim\mathcal{N}(\mu_{x},\sigma^{2}_{x})$, then $E\left[\exp(x)\right]=\exp(\mu_{x}+0.5\sigma^{2}_{x})$. Therefore, under Assumption \ref{assu:normal} the average proportional effect for subgroup $g$ is $\rho_{g}=E(\left.\exp(\tau_{i})\right|D^{(g)}_{i}=1)-1=\exp(\tau_{g}+0.5\varsigma^{2}_{g})-1$. The target ATE in percentage points becomes\footnote{I use the notation $\bar{\rho}^{+}$ here to emphasize the functional form of $\bar{\rho}$ in this setting.}
\begin{align}
\bar{\rho}^{+} & =\sum^{G}_{g=1}w_{g}\exp(\tau_{g}+\frac{1}{2}\varsigma^{2}_{g})-1\label{eq:rhoplus}\\
 & =\sum^{G}_{g=1}w_{g}\exp\left[\tau_{g}+\frac{1}{2}\left(s^{2}_{1g}+s^{2}_{0g}-2r_{g}s_{1g}s_{0g}\right)\right]-1.\nonumber 
\end{align}
When treatment effects are constant within groups, $r_{g}=1$ and $s^{2}_{1g}=s^{2}_{0g}$, hence $\varsigma^{2}_{g}=0$ and $\bar{\rho}^{+}$ reduces to $\rho_{b}$ in (\ref{eq:rho_b}) in the main text.

Estimators of $\bar{\rho}^{+}$ can be obtained by replacing each parameter with its estimator. The estimators for $w$ and $\tau$ in semi-log regressions have already been discussed in Section \ref{sec:examples}. The variances $s^{2}_{1g}$ and $s^{2}_{0g}$ can usually be estimated with observed values of $\ln Y_{i}$ in the treated sample and untreated sample, as discussed in Example 1 (continued) below. The correlation coefficient $r_{g}$ is not point identified from the marginal distribution of $\ln Y_{i}$ without further assumptions, as we can only observe one of $\ln Y_{1i}$ and $\ln Y_{0i}$ for the same unit \citep{heckman_making_1997}. However, $-1\leqslant r_{g}\leqslant1$ and narrower bounds or specific values of $r_{g}$ can be imposed based on application-specific assumptions. For example, the rank preservation assumption, widely used in treatment effect heterogeneity analysis \citep{heckman_making_1997}, posits perfect rank correlation between $\ln Y_{1i}$ and $\ln Y_{0i}$. Under bivariate normality, this implies $r_{g}=1$.\footnote{For bivariate normal distribution, Kendall's rank correlation $r_{k}=\frac{2}{\pi}\arcsin(r_{p})$ and Spearman's rank correlation $r_{s}=\frac{6}{\pi}\arcsin(\frac{r_{p}}{2})$, where $r_{p}$ denotes the Pearson correlation \citep{kruskal_ordinal_1958}. Thus $r_{k}=1$ or $r_{s}=1$ implies $r_{p}=1$.} Another common assumption is independence between treatment effects and baseline outcome \citep{heckman_making_1997,djebbari_heterogeneous_2008}. Under this restriction, $cov(\ln Y_{1i},\ln Y_{0i}|D^{(g)}_{i}=1)=cov(\ln Y_{0i}+\tau_{i},\ln Y_{0i}|D^{(g)}_{i}=1)=s^{2}_{0g}$, yielding $r_{g}=s_{0g}/s_{1g}$ and $\varsigma^{2}_{g}=s^{2}_{1g}-s^{2}_{0g}$. If $\ln Y_{0i}$ and $\tau_{i}$ are positively correlated, then $r_{g}>s_{0g}/s_{1g}$ and $\varsigma^{2}_{g}<s^{2}_{1g}-s^{2}_{0g}$. In many applications, non-negative correlation is plausible, implying $r_{g}\geqslant0$.

With $s_{1g}>0$ and $s_{0g}>0$, Eq. (\ref{eq:rhoplus}) indicates that $\bar{\rho}^{+}$ is strictly decreasing in each $r_{g}$. This monotonicity enables sensitivity analysis by evaluating $\bar{\rho}^{+}$ across different values of $\mathbf{r}=(r_{1},\ldots,r_{G})^{\prime}$. Below I develop the estimation and inference procedures for $\bar{\rho}^{+}$, treating $r_{g}$s as known constants. Let $\hat{s}^{2}_{1g}$ and $\hat{s}^{2}_{0g}$ be respective estimators of $s^{2}_{1g}$ and $s^{2}_{0g}$. The estimator of $\varsigma^{2}_{g}$ is
\begin{equation}
\hat{\varsigma}^{2}_{g}=\hat{s}^{2}_{1g}+\hat{s}^{2}_{0g}-2r_{g}\hat{s}_{1g}\hat{s}_{0g},\label{eq:varsigma}
\end{equation}
and the corresponding estimator for $\bar{\rho}^{+}$ is 
\begin{equation}
\hat{\rho}^{+}_{b}=\sum^{G}_{g=1}\hat{w}_{g}\exp(\hat{\tau}_{g}+0.5\hat{\varsigma}^{2}_{g})-1.\label{eq:rho_b_plus}
\end{equation}
Define $\gamma_{s1}=(s^{2}_{11},\ldots,s^{2}_{1G})^{\prime}$ and $\gamma_{s0}=(s^{2}_{01},\ldots,s^{2}_{0G})^{\prime}$ with estimators $\hat{\gamma}_{s1}=(\hat{s}^{2}_{11},\ldots,\hat{s}^{2}_{1G})^{\prime}$ and $\hat{\gamma}_{s0}=(\hat{s}^{2}_{01},\ldots,\hat{s}^{2}_{0G})^{\prime}$. The asymptotic properties of $\hat{\rho}^{+}_{b}$ follow from an extended version of Assumption \ref{assu:delta} in the main text.
\begin{assumption}
\label{assu:deltaplus} Let $\delta^{+}=(w^{\prime},\tau^{\prime},\gamma^{\prime}_{s1},\gamma^{\prime}_{s0})^{\prime}$ and $\hat{\delta}^{+}=(\hat{w}^{\prime},\hat{\tau}^{\prime},\hat{\gamma}^{\prime}_{s1},\hat{\gamma}^{\prime}_{s0})^{\prime}$. There exists a finite matrix $\bar{\Sigma}_{\delta^{+}}$ and its estimator $\hat{\bar{\Sigma}}_{\delta^{+}}$ such that $\sqrt{N}(\hat{\delta}^{+}-\delta^{+})\xrightarrow{d}\mathcal{N}\left(0,\bar{\Sigma}_{\delta^{+}}\right)$ and $\hat{\bar{\Sigma}}_{\delta^{+}}\xrightarrow{p}\bar{\Sigma}_{\delta^{+}}$.
\end{assumption}
Example 1 (continued) below demonstrates how this assumption is satisfied in semi-log regression models. Similar to Assumption \ref{assu:delta}, Assumption \ref{assu:deltaplus} accommodates the case of known weights, where the variance of $\hat{w}$ and its covariances with other estimators are all zero.

Assumption \ref{assu:deltaplus} indicates that $\hat{w}_{g}$, $\hat{\tau}_{g},$ $\hat{s}^{2}_{1g}$ and $\hat{s}^{2}_{0g}$ are consistent, hence $\hat{\rho}^{+}_{b}$ is consistent by the continuous mapping theorem. The asymptotic distribution follows from the delta method. Observe that $\partial\bar{\rho}^{+}/\partial w_{g}=\exp(\tau_{g}+0.5\varsigma^{2}_{g})$, $\partial\bar{\rho}^{+}/\partial\tau_{g}=w_{g}\exp(\tau_{g}+0.5\varsigma^{2}_{g})$, $\partial\bar{\rho}^{+}/\partial s^{2}_{1g}=0.5w_{g}\exp(\tau_{g}+0.5\varsigma^{2}_{g})(1-r_{g}s_{0g}/s_{1g})$, and $\partial\bar{\rho}^{+}/\partial s^{2}_{0g}=0.5w_{g}\exp(\tau_{g}+0.5\varsigma^{2}_{g})(1-r_{g}s_{1g}/s_{0g})$. Consequently, the gradient vector is
\begin{equation}
\nabla_{\delta^{+}}\bar{\rho}^{+}=\left[\begin{array}{l}
\exp(\tau+0.5\gamma_{\varsigma})\\
w\odot\exp(\tau+0.5\gamma_{\varsigma})\\
\frac{1}{2}\diag^{G}_{g=1}\left\{ 1-\frac{r_{g}s_{0g}}{s_{1g}}\right\} w\odot\exp(\tau+0.5\gamma_{\varsigma})\\
\frac{1}{2}\diag^{G}_{g=1}\left\{ 1-\frac{r_{g}s_{1g}}{s_{0g}}\right\} w\odot\exp(\tau+0.5\gamma_{\varsigma})
\end{array}\right],\label{eq:nablarhoplus}
\end{equation}
where $\gamma_{\varsigma}=(\varsigma^{2}_{1},\ldots,\varsigma^{2}_{G})^{\prime}$ and $\odot$ denotes element-wise multiplication.
\begin{thm}
\label{thm:idio}Suppose that Assumptions \ref{assu:normal} and \ref{assu:deltaplus} hold. Then $E\left[\left(Y_{1i}-Y_{0i}\right)/Y_{0i}\right]=\bar{\rho}^{+}$, and for given values of $r_{1},\ldots,r_{G}$,
\[
\sqrt{N}(\hat{\rho}^{+}_{b}-\bar{\rho}^{+})\xrightarrow{d}\mathcal{N}\left(0,\bar{\sigma}^{+2}_{b}\right),
\]
where $\bar{\sigma}^{+2}_{b}=\left(\nabla_{\delta^{+}}\bar{\rho}^{+}\right)^{\prime}\bar{\Sigma}_{\delta^{+}}\left(\nabla_{\delta^{+}}\bar{\rho}^{+}\right)$. The consistent variance estimator is $\hat{\bar{\sigma}}^{+2}_{b}=\left(\widehat{\nabla_{\delta^{+}}\bar{\rho}^{+}}\right)^{\prime}\hat{\bar{\Sigma}}_{\delta^{+}}\left(\widehat{\nabla_{\delta^{+}}\bar{\rho}^{+}}\right)$ where $\widehat{\nabla_{\delta^{+}}\bar{\rho}^{+}}$ replaces parameters with their consistent estimators.
\end{thm}
This theorem follows from the continuous mapping theorem and the delta method. Proof is omitted. Theorem \ref{thm:idio} enables inference about $\bar{\rho}^{+}$ for any specified correlation structure. When the correlation coefficients are unknown, researchers can report results for different plausible values of $r_{g}$ and compute bounds.

Theorem \ref{thm:idio} readily adapts to various special cases encountered in practice, such as the corollary below.
\begin{cor}
\label{cor:equals0}Suppose $s^{2}_{0g}=s^{2}_{0}$ for all $g$ and $\gamma_{s0}=s^{2}_{0}$. Then Theorem \ref{thm:idio} holds with updated $\gamma_{s0}$ in Assumption \ref{assu:deltaplus} and the modified gradient
\begin{equation}
\nabla_{\delta^{+}}\bar{\rho}^{+}=\left[\begin{array}{l}
\exp(\tau+0.5\gamma_{\varsigma})\\
w\odot\exp(\tau+0.5\gamma_{\varsigma})\\
\frac{1}{2}\diag^{G}_{g=1}\left\{ 1-\frac{r_{g}s_{0g}}{s_{1g}}\right\} w\odot\exp(\tau+0.5\gamma_{\varsigma})\\
\frac{1}{2}\sum^{G}_{g=1}\left(1-\frac{r_{g}s_{1g}}{s_{0g}}\right)w_{g}\exp(\tau_{g}+0.5\varsigma^{2}_{g})
\end{array}\right].\label{eq:nablarhoplus-1}
\end{equation}
\end{cor}
Similarly, when variances of log potential outcomes follow a known parametric form or when certain groups share common variance parameters, the theorem applies with appropriate modifications to the parameter vector and gradient. The key requirement is that Assumption \ref{assu:deltaplus} holds for the specific parameterization chosen.

The example below shows how Assumption \ref{assu:deltaplus} is satisfied in semi-log regression models.

\subsubsection*{Example 1 (Continued): Semi-log Regression Model\label{subsec:OLS_continued}}

I now extend Example 1 in Section \ref{sec:examples} to accommodate normally distributed within-group heterogeneity while maintaining the focus on the average treatment effect on the treated.

Adapting Assumption \ref{assu:normal} to condition on $X_{i}$ with a linear specification, for individual $i$ in subgroup $g$:
\begin{equation}
\left.\left(\begin{array}{c}
\ln Y_{1i}\\
\ln Y_{0i}
\end{array}\right)\right|\left(D^{(g)}_{i}=1,X_{i}\right)\sim\mathcal{N}\left(\left(\begin{array}{c}
X^{\prime}_{i}\beta+\tau_{g}\\
X^{\prime}_{i}\beta
\end{array}\right),\left(\begin{array}{cc}
s^{2}_{1g} & r_{g}s_{1g}s_{0g}\\
r_{g}s_{1g}s_{0g} & s^{2}_{0g}
\end{array}\right)\right).\label{eq:olsass1}
\end{equation}
Since $\ln Y_{0i}$ is unobserved for the treatment group, identification of $s^{2}_{0g}$ requires additional structure. I assume that there are $G$ sub-control groups paired with the $G$ sub-treatment groups, such that
\begin{equation}
\left.\ln Y_{0i}\right|\left(C^{(g)}_{i}=1,X_{i}\right)\sim\mathcal{N}\left(X^{\prime}_{i}\beta,s^{2}_{0g}\right),\label{eq:olsass2}
\end{equation}
where $C^{(g)}_{i}$ is the dummy for the $g-$th sub-control group and is included in $X_{i}$. This assumption is plausible in settings with randomization within strata. For example, if individuals are randomly assigned to treatment within cities, the variance of $\ln Y_{0i}$ should be identical for treatment and control units within each city. Denote the sample size of sub-control group $g$ as $M_{g}$. Then $\sum^{G}_{g=1}M_{g}=N-N_{T}$ is the sample size of the whole control group. Let $\varpi_{g}=E(C^{(g)}_{i}=1|T_{i}=0)$ be the share of the sub-control group $g$ in the control population. Similar to $w_{g}$, $\varpi_{g}$ can be estimated by $\hat{\varpi}_{g}=M_{g}/(N-N_{T}).$

Equations (\ref{eq:olsass1}) and (\ref{eq:olsass2}) indicate that
\begin{align*}
\ln Y_{i} & =\ln Y_{0i}+T_{i}\tau_{i}=X^{\prime}_{i}\beta+D_{i}\tau+\epsilon_{i},
\end{align*}
where
\begin{align*}
\epsilon_{i} & =(1-T_{i})\left[\ln Y_{0i}-E(\ln Y_{0i}|X_{i},D_{i})\right]+T_{i}\left[\ln Y_{1i}-E(\ln Y_{1i}|X_{i},D_{i})\right].
\end{align*}
The regression specification above is the same as Eq. (\ref{eq:ols}), except for the structure of $\epsilon_{i}$. The estimators $\hat{w}$, $\hat{\tau}$ and $\hat{p}_{T}$ remain as defined in Section \ref{sec:examples}. For the variance parameters, observe that when $D^{(g)}_{i}=1$, $\epsilon_{i}=\ln Y_{1i}-E(\ln Y_{1i}|X_{i},D_{i})$ has variance $s^{2}_{1g}$, and when $C^{(g)}_{i}=1$, $\epsilon_{i}=\ln Y_{0i}-E(\ln Y_{0i}|X_{i},D_{i})$ has variance $s^{2}_{0g}$. This motivates the following estimators: $\hat{s}^{2}_{1g}=\sum^{N}_{i=1}\hat{\epsilon}^{2}_{i}D^{(g)}_{i}/N_{g},$ $\hat{s}^{2}_{0g}=\sum^{N}_{i=1}\hat{\epsilon}^{2}_{i}C^{(g)}_{i}/M_{g}$, where $\hat{\epsilon}_{i}$ is the OLS residual.
\begin{lem}
\label{lem:olsplus}Suppose conditions (i), (ii), and (iii) in Lemma \ref{lem:wtauestimation} hold. In addition, (i) $X_{i}$ now includes the sub-control group indicators $C_{i}=(C^{(1)}_{i},\ldots,C^{(G)}_{i})^{\prime}$ along with other covariates; (ii) for all $g$, $0<\varpi_{g}\leqslant1$; (iii) Condition (iv) in Lemma \ref{lem:wtauestimation} is updated to $\epsilon_{i}|\left(X_{i},D^{(g)}_{i}=1\right)\sim\mathcal{N}(0,s^{2}_{1g})$ and $\epsilon_{i}|\left(X_{i},C^{(g)}_{i}=1\right)\sim\mathcal{N}(0,s^{2}_{0g})$. Then Assumption \ref{assu:deltaplus} holds with: 
\[
\sqrt{N}\left(\hat{\delta}^{+}-\delta^{+}\right)\xrightarrow{d}\mathcal{N}\left(0,\bar{\Sigma}_{\delta^{+}}\right),
\]
where $\bar{\Sigma}_{\delta^{+}}=\diag\left\{ \bar{\Sigma}_{w},\bar{\Sigma}_{\tau},\bar{\Sigma}_{s1},\bar{\Sigma}_{s0}\right\} $, with $\bar{\Sigma}_{\tau}$ and $\bar{\Sigma}_{w}$ as defined in Lemma \ref{lem:wtauestimation}, $\bar{\Sigma}_{s1}=\diag^{G}_{g=1}\{2s^{4}_{1g}/(p_{T}w_{g})\}$ and $\bar{\Sigma}_{s0}=\diag^{G}_{g=1}\{2s^{4}_{0g}/[(1-p_{T})\varpi_{g}]\}$. A consistent estimator of $\bar{\Sigma}_{\delta^{+}}$ is $\hat{\bar{\Sigma}}_{\delta^{+}}=\diag\left\{ \hat{\bar{\Sigma}}_{w},\hat{\bar{\Sigma}}_{\tau},\hat{\bar{\Sigma}}_{s1},\hat{\bar{\Sigma}}_{s0}\right\} $, where $\hat{\bar{\Sigma}}_{\tau}$ and $\hat{\bar{\Sigma}}_{w}$ are as defined in Lemma \ref{lem:wtauestimation}, $\hat{\bar{\Sigma}}_{s1}=\diag^{G}_{g=1}\left\{ 2\hat{s}^{4}_{1g}/(\hat{p}_{T}\hat{w}_{g})\right\} $ and $\hat{\bar{\Sigma}}_{s0}=\diag^{G}_{g=1}\left\{ 2\hat{s}^{4}_{0,g}/[(1-\hat{p}_{T})\hat{\varpi}_{g}]\right\} $.
\end{lem}
Proof is in Section \ref{subsec:Proof-of-olsplus}.

Corresponding to the case in Corollary \ref{cor:equals0}, $s^{2}_{0g}$ is also identified if $s^{2}_{0g}=s^{2}_{0}$ for all $g$, and we have a single control group which also has $\Var(\ln Y_{0i})=s^{2}_{0}$. Then Eq. (\ref{eq:olsass2}) becomes
\begin{equation}
\left.\ln Y_{0i}\right|\left(T_{i}=0,X_{i}\right)\sim\mathcal{N}\left(X^{\prime}_{i}\beta,s^{2}_{0}\right).\label{eq:olsass2-1}
\end{equation}
The regression specification and estimators for $w$, $\tau$, $\gamma_{s1}$, $\bar{\Sigma}_{w}$, $\bar{\Sigma}_{\tau}$, $\bar{\Sigma}_{s1}$ remain unchanged. For $\gamma_{s0}=s^{2}_{0}$, observe that when $T_{i}=0$, $\epsilon_{i}=\ln Y_{0i}-E(\ln Y_{0i}|X_{i},D_{i})$ has variance $s^{2}_{0}$, hence the estimator of $s^{2}_{0}$ is $\hat{s}^{2}_{0}=\sum^{N}_{i=1}\hat{\epsilon}^{2}_{i}(1-T_{i})/(N-N_{T}).$
\begin{cor}
\label{cor:olsplus-1}Suppose all conditions in Lemma \ref{lem:wtauestimation} hold. In addition, $s^{2}_{0g}=s^{2}_{0}$ and condition (iv) in Lemma \ref{lem:wtauestimation} is updated to $\epsilon_{i}|\left(X_{i},D^{(g)}_{i}=1\right)\sim\mathcal{N}(0,s^{2}_{1g})$ and $\epsilon_{i}|\left(X_{i},T_{i}=0\right)\sim\mathcal{N}(0,s^{2}_{0})$. Then the results of Lemma \ref{lem:olsplus} hold with $\gamma_{s0}=s^{2}_{0}$, $\bar{\Sigma}_{s0}=2s^{4}_{0}/(1-p_{T})$, and $\hat{\bar{\Sigma}}_{s0}=2\hat{s}^{4}_{0}/(1-\hat{p}_{T})$.
\end{cor}
This framework extends naturally to staggered difference-in-differences designs by treating each cohort-event time combination $(c,r)$ as a sub-treatment group. For variance estimation in such settings, one can assume that $\Var(\ln Y_{0i})$ remains constant within each cohort across periods, so that $\Var(\ln Y_{0i}|c=c^{\ast},r=r^{\ast})=\Var(\ln Y_{0i}|c=c^{\ast},r\leqslant-1)$. This enables the use of pre-treatment observations to estimate $\Var(\ln Y_{0i})$ for each cohort.

\subsection{Proof of Lemma \ref{lem:olsplus}\label{subsec:Proof-of-olsplus}}

Since all conditions in Lemma \ref{lem:wtauestimation} still hold, the results for $\hat{w}$ and $\hat{\tau}$ in Lemma \ref{lem:wtauestimation} remain valid. For the variance estimators, observe that
\begin{align}
 & \sqrt{N}\left(\hat{s}^{2}_{1g}-s^{2}_{1g}\right)\nonumber \\
= & \frac{\sqrt{N}}{N_{g}}\sum^{N}_{i=1}D^{(g)}_{i}\left(\hat{\epsilon}^{2}_{i}-s^{2}_{1g}\right)=\frac{1}{\sqrt{N}}\frac{N}{N_{g}}\sum^{N}_{i=1}D^{(g)}_{i}\left[\left(\epsilon^{2}_{i}-s^{2}_{1g}\right)+\left(\hat{\epsilon}^{2}_{i}-\epsilon^{2}_{i}\right)\right]\nonumber \\
= & \left(\frac{1}{p_{T}w_{g}}+\frac{N}{N_{g}}-\frac{1}{p_{T}w_{g}}\right)\frac{1}{\sqrt{N}}\sum^{N}_{i=1}D^{(g)}_{i}\left(\epsilon^{2}_{i}-s^{2}_{1g}\right)+\frac{1}{\sqrt{N}}\frac{N}{N_{g}}\sum^{N}_{i=1}D^{(g)}_{i}\left(\hat{\epsilon}^{2}_{i}-\epsilon^{2}_{i}\right)\nonumber \\
= & \frac{1}{p_{T}w_{g}}\frac{1}{\sqrt{N}}\sum^{N}_{i=1}D^{(g)}_{i}\left(\epsilon^{2}_{i}-s^{2}_{1g}\right)+\left(\frac{N}{N_{g}}-\frac{1}{p_{T}w_{g}}\right)\frac{1}{\sqrt{N}}\sum^{N}_{i=1}D^{(g)}_{i}\left(\epsilon^{2}_{i}-s^{2}_{1g}\right)+\frac{1}{\sqrt{N}}\sum^{N}_{i=1}\frac{N}{N_{g}}D^{(g)}_{i}\left(\hat{\epsilon}^{2}_{i}-\epsilon^{2}_{i}\right).\label{eq:s2-1}
\end{align}
I first show that the second and third terms converge in probability to 0. For the second term, since $(D^{(g)}_{i},\epsilon_{i})$ are independently distributed across $i$ with $\epsilon_{i}|D^{(g)}_{i}=1\sim\mathcal{N}(0,s^{2}_{1g})$, we have $D^{(g)}_{i}\left(\epsilon^{2}_{i}-s^{2}_{1g}\right)$ independently distributed across $i$ with
\begin{equation}
E\left[D^{(g)}_{i}\left(\epsilon^{2}_{i}-s^{2}_{1g}\right)\right]=Pr(D^{(g)}_{i}=1)E(\epsilon^{2}_{i}-s^{2}_{1g}|D^{(g)}_{i}=1)=0,\label{eq:Ede2-1}
\end{equation}
and 
\begin{equation}
\Var\left[D^{(g)}_{i}\left(\epsilon^{2}_{i}-s^{2}_{1g}\right)\right]=Pr(D^{(g)}_{i}=1)E\left[\left(\epsilon^{2}_{i}-s^{2}_{1g}\right)^{2}|D^{(g)}_{i}=1\right]=2p_{T}w_{g}s^{4}_{1g},\label{eq:Ede22-1}
\end{equation}
where the last equality uses the property that for $\epsilon_{i}\sim\mathcal{N}(0,s^{2}_{1g})$, we have $E(\epsilon^{4}_{i})=3s^{4}_{1g}$. By the central limit theorem,
\begin{equation}
\frac{1}{\sqrt{N}}\sum^{N}_{i=1}D^{(g)}_{i}\left(\epsilon^{2}_{i}-s^{2}_{1g}\right)\xrightarrow{d}\mathcal{N}\left(0,2p_{T}w_{g}s^{4}_{1g}\right),g=1,\ldots,G.\label{eq:clt_de2-2}
\end{equation}
Since $N/N_{g}-1/(p_{T}w_{g})\xrightarrow{p}0$, the second term in (\ref{eq:s2-1}) converges to $0$ in probability by Slutsky's theorem.

For the third term, let $\epsilon=(\epsilon_{1},\ldots,\epsilon_{N})^{\prime}$ and $\hat{\epsilon}=M_{x}\epsilon$, where $M_{x}=I-P_{x}$ and $P_{x}=\tilde{\mathbf{X}}\left(\tilde{\mathbf{X}}^{\prime}\tilde{\mathbf{X}}\right)^{-1}\tilde{\mathbf{X}}^{\prime}$. Let $A_{g}=\diag^{N}_{i=1}\{D^{(g)}_{i}\}$, then
\begin{equation}
\frac{1}{\sqrt{N}}\sum^{N}_{i=1}\frac{N}{N_{g}}D^{(g)}_{i}\left(\hat{\epsilon}^{2}_{i}-\epsilon^{2}_{i}\right)=\frac{N}{N_{g}}\frac{1}{\sqrt{N}}\left(\hat{\epsilon}^{\prime}A_{g}\hat{\epsilon}-\epsilon^{\prime}A_{g}\epsilon\right)=\frac{N}{N_{g}}\frac{1}{\sqrt{N}}\epsilon^{\prime}B_{g}\epsilon,\label{eq:third-1}
\end{equation}
where $B_{g}=M_{x}A_{g}M_{x}-A_{g}=P_{x}A_{g}P_{x}-P_{x}A_{g}-A_{g}P_{x}$. Under the conditions specified in Lemma \ref{lem:olsplus}, $\Omega\equiv E\left(\epsilon\epsilon^{\prime}\mid\tilde{\mathbf{X}}\right)=\diag^{N}_{i=1}\left\{ \sum^{G}_{g=1}\left(D^{(g)}_{i}s^{2}_{1g}+C^{(g)}_{i}s^{2}_{0g}\right)\right\} $. We have $E(\epsilon^{\prime}B_{g}\epsilon|\tilde{\mathbf{X}})=\tr(B_{g}\Omega)$, and using Lemma A.1 in \citet{kelejian_specification_2010}, $\Var\left(\epsilon^{\prime}B_{g}\epsilon\mid\tilde{\mathbf{X}}\right)=2\tr(B_{g}\Omega B_{g}\Omega)$. To show $N^{-1/2}\epsilon^{\prime}B_{g}\epsilon\xrightarrow{p}0$, I verify that $B_{g}\Omega$ is $O_{p}(1/N)$ uniformly in $N$. Observe that 
\[
P_{x}A_{g}P_{x}\Omega=\frac{1}{N}\tilde{\mathbf{X}}\left(\frac{\tilde{\mathbf{X}}^{\prime}\tilde{\mathbf{X}}}{N}\right)^{-1}\frac{\tilde{\mathbf{X}}^{\prime}A_{g}\tilde{\mathbf{X}}}{N}\left(\frac{\tilde{\mathbf{X}}^{\prime}\tilde{\mathbf{X}}}{N}\right)^{-1}\tilde{\mathbf{X}}^{\prime}\Omega.
\]
Since $\tilde{X}_{i}$ are i.i.d. with $E\left(\tilde{X}_{i}\tilde{X}^{\prime}_{i}\right)$ finite and nonsingular, $(N^{-1}\tilde{\mathbf{X}}^{\prime}\tilde{\mathbf{X}})^{-1}\xrightarrow{p}\left[E(\tilde{X}_{i}\tilde{X}^{\prime}_{i})\right]^{-1}$ uniformly in $N$. Moreover, since $\tilde{X}_{i}\tilde{X}^{\prime}_{i}$ is i.i.d. with finite mean and $D^{(g)}_{i}\in\{0,1\}$ is included in $\tilde{X}_{i}$, we have that $D^{(g)}_{i}\tilde{X}_{i}\tilde{X}^{\prime}_{i}$ is i.i.d. and $O_{p}(1)$ uniformly, so $\tilde{\mathbf{X}}^{\prime}A_{g}\tilde{\mathbf{X}}/N=N^{-1}\sum^{N}_{i=1}\tilde{X}_{i}\tilde{X}^{\prime}_{i}D^{(g)}_{i}$ is bounded uniformly in $N$. Finally, elements of $\tilde{\mathbf{X}}^{\prime}\Omega$ are also $O_{p}(1)$ uniformly in $N$ as $\tilde{X}_{i}$ is i.i.d. and $\Omega$ is diagonal with finite elements. Therefore, $P_{x}A_{g}P_{x}\Omega$ is $O_{p}(1/N)$ uniformly in $N$. Similarly, $P_{x}A_{g}\Omega$ and $A_{g}P_{x}\Omega$ are both $O_{p}(1/N)$ uniformly. In all, $B_{g}\Omega$ is $O_{p}(1/N)$ uniformly in $N$. Consequently, $\tr(B_{g}\Omega)$ is $O_{p}(1)$, and $\tr(B_{g}\Omega B_{g}\Omega)=\tr\left[\left(B_{g}\Omega\right)\left(B_{g}\Omega\right)\right]=\sum^{N}_{i=1}\sum^{N}_{j=1}(B_{g}\Omega)_{ij}(B_{g}\Omega)_{ji}$ is $O_{p}(1)$. As a result, $E\left(N^{-1/2}\epsilon^{\prime}B_{g}\epsilon\right)\rightarrow0$ and $\Var\left(N^{-1/2}\epsilon^{\prime}B_{g}\epsilon\right)\rightarrow0$, implying $N^{-1/2}\epsilon^{\prime}B_{g}\epsilon\xrightarrow{p}0$. Since $N/N_{g}\xrightarrow{p}1/(p_{T}w_{g})$, Eq. (\ref{eq:third-1}) converges to $0$ in probability.

Consequently, 
\[
\sqrt{N}\left(\hat{s}^{2}_{1g}-s^{2}_{1g}\right)=\frac{1}{p_{T}w_{g}}\frac{1}{\sqrt{N}}\sum^{N}_{i=1}D^{(g)}_{i}\left(\epsilon^{2}_{i}-s^{2}_{1g}\right)+o_{p}(1).
\]
Similarly 
\[
\sqrt{N}\left(\hat{s}^{2}_{0g}-s^{2}_{0g}\right)=\frac{1}{(1-p_{T})\varpi_{g}}\frac{1}{\sqrt{N}}\sum^{N}_{i=1}C^{(g)}_{i}\left(\epsilon^{2}_{i}-s^{2}_{0g}\right)+o_{p}(1).
\]
with
\begin{equation}
\frac{1}{\sqrt{N}}\sum^{N}_{i=1}C^{(g)}_{i}\left(\epsilon^{2}_{i}-s^{2}_{0g}\right)\xrightarrow{d}\mathcal{N}\left(0,2(1-p_{T})\varpi_{g}s^{4}_{0g}\right),g=1,\ldots,G.\label{eq:clt_de2-1-1}
\end{equation}

Combining the results above with (\ref{eq:joint-1}) in the proof of Lemma \ref{lem:wtauestimation}, 
\[
\sqrt{N}(\hat{\delta}^{+}-\delta^{+})=\sqrt{N}\left(\begin{array}{c}
\hat{w}-w\\
\hat{\tau}-\tau\\
\hat{\gamma}_{s1}-\gamma_{s1}\\
\hat{\gamma}_{s0}-\gamma_{s0}
\end{array}\right)=\frac{1}{\sqrt{N}}\sum^{N}_{i=1}\left(\begin{array}{c}
p^{-1}_{T}T_{i}v_{i}\\
\bar{C}_{D}\tilde{X}_{i}\epsilon_{i}\\
\frac{1}{p_{T}w_{1}}D^{(1)}_{i}\left(\epsilon^{2}_{i}-s^{2}_{11}\right)\\
\vdots\\
\frac{1}{p_{T}w_{G}}D^{(G)}_{i}\left(\epsilon^{2}_{i}-s^{2}_{1G}\right)\\
\frac{1}{(1-p_{T})\varpi_{1}}C^{(1)}_{i}\left(\epsilon^{2}_{i}-s^{2}_{01}\right)\\
\vdots\\
\frac{1}{(1-p_{T})\varpi_{G}}C^{(G)}_{i}\left(\epsilon^{2}_{i}-s^{2}_{0G}\right)
\end{array}\right)+o_{p}(1).
\]
For the covariances between $\hat{\tau}$ and $\left(\hat{\gamma}^{\prime}_{s1},\hat{\gamma}^{\prime}_{s0}\right)^{\prime}$, observe that $\epsilon_{i}|\tilde{X}_{i}$ is normally distributed, thus $E(\epsilon^{3}_{i}|\tilde{X}_{i})=0$, and
\[
E\left(\bar{C}_{D}\tilde{X}_{i}\epsilon_{i}D^{(g)}_{i}\left(\epsilon^{2}_{i}-s^{2}_{1g}\right)\right)=\Pr(D^{(g)}_{i}=1)E\left[\bar{C}_{D}\tilde{X}_{i}\left(\epsilon^{3}_{i}-s^{2}_{1g}\epsilon_{i}\right)|D^{(g)}_{i}=1\right]=0.
\]
Similarly, $E\left(\bar{C}_{D}\tilde{X}_{i}\epsilon_{i}C^{(g)}_{i}\left(\epsilon^{2}_{i}-s^{2}_{0g}\right)\right)=0$.

For the covariance between $\hat{w}$ and $\left(\hat{\gamma}^{\prime}_{s1},\hat{\gamma}^{\prime}_{s0}\right)^{\prime}$, note that $E\left[D^{(g)}_{i}\left(\epsilon^{2}_{i}-s^{2}_{1g}\right)|\tilde{X}_{i}\right]=0$. Since $T_{i}v_{i}$ are functions of $D_{i}$ and hence of $\tilde{X}_{i}$, we have $E\left(T_{i}v_{i}D^{(g)}_{i}\left(\epsilon^{2}_{i}-s^{2}_{1g}\right)\right)=0$. Similarly, $E\left[T_{i}v_{i}C^{(g)}_{i}\left(\epsilon^{2}_{i}-s^{2}_{0g}\right)\right]=0$. For the variance-covariance matrix of $\left(\hat{\gamma}^{\prime}_{s1},\hat{\gamma}^{\prime}_{s0}\right)^{\prime}$, note that all sub-treatment and sub-control groups are mutually exclusive, hence $D^{(g)}_{i}C^{(g^{\prime})}_{i}=0$ and for $g\neq g^{\prime}$, $D^{(g)}_{i}D^{(g^{\prime})}_{i}=0$ and $C^{(g)}_{i}C^{(g^{\prime})}_{i}=0$. As a result, the variance-covariance matrix of $\left(\hat{\gamma}^{\prime}_{s1},\hat{\gamma}^{\prime}_{s0}\right)^{\prime}$ is diagonal. Using (\ref{eq:clt_de2-2}) and (\ref{eq:clt_de2-1-1}) and (\ref{eq:conv_d-1}), the central limit theorem yields $\sqrt{N}(\hat{\delta}^{+}-\delta^{+})\xrightarrow{d}\mathcal{N}(0,\bar{\Sigma}_{\delta^{+}})$, where $\bar{\Sigma}_{\delta^{+}}=\diag\{\bar{\Sigma}_{w},\bar{\Sigma}_{\tau},\bar{\Sigma}_{s}\}$, with $\bar{\Sigma}_{w}$ and $\bar{\Sigma}_{\tau}$ given in Lemma \ref{lem:wtauestimation}, and
\[
\bar{\Sigma}_{s}=\left[\begin{array}{cc}
\diag^{G}_{g=1}\{\frac{2}{p_{T}w_{g}}s^{4}_{1g}\} & 0\\
0 & \diag^{G}_{g=1}\{\frac{2}{(1-p_{T})\varpi_{g}}s^{4}_{0g}\}
\end{array}\right].
\]

To establish consistency of $\hat{\bar{\Sigma}}_{\delta^{+}}$, it suffices to show $\hat{\bar{\Sigma}}_{s}\xrightarrow{p}\bar{\Sigma}_{s}$ as consistency of $\hat{\bar{\Sigma}}_{w}$ and $\hat{\bar{\Sigma}}_{\tau}$ is already proven in Lemma \ref{lem:wtauestimation}. Consistency of $\hat{p}_{T}$ and $\hat{w}_{g}$ is proven in Lemma \ref{lem:wtauestimation}, and consistency of $\hat{\varpi}_{g}$ follows analogously. Since $\hat{\gamma}_{s1}-\gamma_{s1}\xrightarrow{p}0$ and $\hat{\gamma}_{s0}-\gamma_{s0}\xrightarrow{p}0$, the result thus follows from the continuous mapping theorem.

\section{Bias-Corrected Estimator for $\rho_{b}$\label{sec:rho_c}}

While the proposed estimator $\hat{\rho}_{b}$ is consistent for $\rho_{b}$ and asymptotically normal, finite-sample bias arises as demonstrated by Monte Carlo simulation results in Section \ref{sec:MC} in the main text. This section presents a simple bias-correction method for estimating $\rho_{b}$ when sample sizes are small.

The estimator $\hat{\rho}_{b}$ is biased due to the convexity of the exponential function: when $E(\hat{\tau}_{g})=\tau_{g}$, as is the OLS estimator of $\tau_{g}$ in Example 1, $E(\exp(\hat{\tau}_{g}))\neq\exp(E(\hat{\tau}_{g}))=\exp(\tau_{g})$ unless $\Var(\hat{\tau}_{g})=0$. As sample size increases, $\Var(\hat{\tau}_{g})\rightarrow0$, and the difference between $\exp(\tau_{g})$ and $E(\exp(\hat{\tau}_{g}))$ diminishes. The bias is more pronounced with larger $G$, as with more subgroups the size of each group decreases and $\Var(\hat{\tau}_{g})$ increases.

Extending the bias-correction approach of \citet{kennedy_estimation_1981} to accommodate heterogeneous treatment effects yields the bias-corrected estimator:
\begin{equation}
\hat{\rho}_{c}=\sum^{G}_{g=1}\hat{w}_{g}\exp(\hat{\tau}_{g}-0.5\hat{\sigma}^{2}_{\tau,g})-1\label{eq:rho_c-1}
\end{equation}
where $\hat{\sigma}^{2}_{\tau,g}$ is the estimator of the asymptotic variance of $\hat{\tau}_{g}$. Since $\hat{\sigma}^{2}_{\tau,g}\xrightarrow{p}0$ under standard regularity conditions, $\hat{\rho}_{c}-\hat{\rho}_{b}\xrightarrow{p}0$, hence $\hat{\rho}_{c}$ is consistent for $\rho_{b}$.

The bias-corrected estimator utilizes the property that if $x\sim\mathcal{N}(\mu_{x},\sigma^{2}_{x})$, then $E\left(\exp(x)\right)=\exp(\mu_{x}+0.5\sigma^{2}_{x}).$ Hence when $\hat{\tau}_{g}\sim\mathcal{N}(\tau_{g},\sigma^{2}_{\tau,g})$, $E\exp(\hat{\tau}_{g}-0.5\sigma^{2}_{\tau,g})=\exp(\tau_{g})$. As demonstrated in the proof of Lemma \ref{lem:wtauestimation} in Section \ref{sec:examples}, for semi-log regression models as in Example 1, we have $E(\hat{w}_{g})=w_{g}$ as $\hat{w}_{g}$ is an OLS estimator for $w_{g}$ and regularity conditions hold, and $\hat{\tau}_{g}|\hat{w}_{g}\sim\mathcal{N}(\tau_{g},\sigma^{2}_{\tau,g})$ asymptotically because $\hat{\tau}$ and $\hat{w}$ are jointly asymptotically normal and asymptotically uncorrelated. 
\begin{align*}
E\left[\sum^{G}_{g=1}\hat{w}_{g}\exp(\hat{\tau}_{g}-0.5\sigma^{2}_{\tau,g})\right]-1 & =\sum^{G}_{g=1}E\left[\hat{w}_{g}E\left(\exp(\hat{\tau}_{g}-0.5\sigma^{2}_{\tau,g})\left|\hat{w}_{g}\right.\right)\right]-1\\
 & =\sum^{G}_{g=1}E\left[\hat{w}_{g}\exp(\tau_{g})\right]-1=\sum^{G}_{g=1}w_{g}\exp(\tau_{g})-1=\rho_{b}.
\end{align*}
The estimator $\hat{\rho}_{c}$ remains biased for $\rho_{b}$ due to $E\left(\exp(\hat{\sigma}^{2}_{\tau,g})\right)\neq\exp\left(E(\hat{\sigma}^{2}_{\tau,g})\right)$. However, the bias is substantially smaller than that of $\hat{\rho}_{b}$ and decreases quickly with sample size.

\section{Additional Monte Carlo and Empirical Results \label{sec:app_mc}}

\subsection{Skew-normal Error Terms}

In this section, I replicate the Monte Carlo analysis from Table \ref{tab:monteS0truew0} in the main text using skew-normal errors to assess the robustness of $\hat{\rho}_{b}$ and associated inference methods to non-normal errors.

The data generating process is exactly the same as that for Table \ref{tab:monteS0truew0}, except that the error terms $\epsilon_{i}$ now follow an i.i.d. skew-normal distribution, with location parameter $0$, shape parameter $-5$, and scale parameter $1/\sqrt{1-50/(26\pi)}$. As a result, $\epsilon_{i}$ has variance $1$, skewness $-0.851$ and excess kurtosis $0.705$. All other settings remain the same as the simulations using normal errors in the main text. Table \ref{tab:monteS1truew0} displays results, comparable to Table \ref{tab:monteS0truew0}. The results exhibit similar patterns to those observed with normal errors, further supporting the robustness of the proposed estimation and inference methods in this study.
\begin{table}[hp]
\small
\caption{Monte Carlo Results for Estimation and Inference on $\rho_b$: Skew-normal Errors}
\label{tab:monteS1truew0}
\begin{center}
\hspace*{-10mm}
\begin{threeparttable}
\begin{tabular}{cccccccccccccc}
\hline \hline  & \multicolumn{6}{c}{Small Heterogeneity} & & \multicolumn{6}{c}{ Large Heterogeneity} \\ \cline{2-7} \cline{9-14}
& \multicolumn{3}{c}{Estimators} & & \multicolumn{2}{c}{ERR} & & \multicolumn{3}{c}{Estimators} & & \multicolumn{2}{c}{ERR} \\ \cline{2-4} \cline{6-7} \cline{9-11} \cline{13-14}
$ N $ & $ \hat{\bar{\tau}} $ & $ \hat{\rho}_a $  & $ \hat{\rho}_b $ & & $ z_{\tau} $ & $ z_{\rho} $ & &  $ \hat{\bar{\tau}} $ & $ \hat{\rho}_a $  & $ \hat{\rho}_b $ & & $ z_{\tau} $ & $ z_{\rho} $ \\ \hline
 \multicolumn{14}{c}{\textit{True Values}} \\
      & $ -0.201 $ & $ -0.200 $  & $ 0 $ & & $ 5 $ & $ 5 $ & & $ -3.349 $ & $ -3.294 $  & $ 0 $ & & $ 5 $ & $ 5 $ \\
 \multicolumn{14}{c}{\textit{Estimates}} \\20 & 0.10 & 23.31 & 35.67 &  & 6.17 & 6.78 &  & -3.23 & 19.21 & 35.23 &  & 6.10 & 6.85  \\
 & (61.79) & (104.02) & (113.77) &  &  &  &  & (61.76) & (102.58) & (116.02) &  &  &   \\
50 & -0.17 & 7.20 & 11.49 &  & 5.36 & 5.33 &  & -3.39 & 3.93 & 11.43 &  & 5.40 & 5.53  \\
 & (37.15) & (43.77) & (45.38) &  &  &  &  & (37.46) & (42.95) & (46.02) &  &  &   \\
100 & -0.24 & 3.13 & 5.29 &  & 5.19 & 5.17 &  & -3.42 & -0.07 & 5.22 &  & 5.24 & 5.10  \\
 & (25.65) & (27.44) & (27.99) &  &  &  &  & (25.73) & (26.75) & (28.27) &  &  &   \\
200 & -0.20 & 1.43 & 2.58 &  & 5.16 & 5.09 &  & -3.44 & -1.80 & 2.48 &  & 5.47 & 5.11  \\
 & (17.93) & (18.49) & (18.69) &  &  &  &  & (17.98) & (17.96) & (18.82) &  &  &   \\
500 & -0.20 & 0.43 & 1.01 &  & 4.95 & 4.92 &  & -3.39 & -2.71 & 0.96 &  & 6.04 & 5.08  \\
 & (11.24) & (11.37) & (11.43) &  &  &  &  & (11.31) & (11.07) & (11.55) &  &  &   \\
1000 & -0.19 & 0.13 & 0.52 &  & 5.02 & 5.00 &  & -3.34 & -2.98 & 0.52 &  & 6.83 & 4.92  \\
 & (7.96) & (7.99) & (8.02) &  &  &  &  & (7.93) & (7.71) & (8.03) &  &  &   \\
2000 & -0.14 & 0.02 & 0.32 &  & 5.01 & 4.96 &  & -3.36 & -3.15 & 0.24 &  & 9.21 & 5.05  \\
 & (5.60) & (5.61) & (5.63) &  &  &  &  & (5.64) & (5.47) & (5.70) &  &  &   \\
5000 & -0.21 & -0.15 & 0.09 &  & 4.96 & 4.92 &  & -3.34 & -3.23 & 0.11 &  & 15.66 & 5.07  \\
 & (3.54) & (3.54) & (3.55) &  &  &  &  & (3.57) & (3.46) & (3.60) &  &  &   \\
$10^4$ & -0.21 & -0.18 & 0.04 &  & 5.12 & 5.07 &  & -3.34 & -3.26 & 0.06 &  & 26.36 & 5.05  \\
 & (2.50) & (2.50) & (2.50) &  &  &  &  & (2.52) & (2.44) & (2.54) &  &  &   \\
$10^5$ & -0.20 & -0.20 & 0.00 &  & 5.71 & 4.99 &  & -3.35 & -3.29 & 0.00 &  & 98.70 & 5.01  \\
 & (0.79) & (0.79) & (0.79) &  &  &  &  & (0.80) & (0.77) & (0.80) &  &  &   \\
$10^6$ & -0.20 & -0.20 & 0.00 &  & 12.69 & 5.06 &  & -3.35 & -3.29 & 0.00 &  & 100.00 & 5.09  \\
 & (0.25) & (0.25) & (0.25) &  &  &  &  & (0.25) & (0.24) & (0.25) &  &  &   \\
\hline \hline \end{tabular}
\begin{tablenotes}
\footnotesize \item 1. Entries report Monte Carlo means and standard deviations (in parentheses) for $\hat{\bar{\tau}}$, $\hat{\rho}_a$, $\hat{\rho}_b$, based on $ 10^5$ replications for each sample size $N$. 
Also reported are empirical rejection rates (ERRs, in percent) at the 5\% nominal level for two-sided tests of $H_0: \bar{\tau}=0$ using $z_\tau$, and of $H_0: \rho_b=0$ using $z_{\rho}$. 
 \item 2. The left panel uses $\tau=(\ln(0.92),\ln(0.96),\ln(1.04),\ln(1.08))'$, and the right panel uses $\tau=(\ln(0.68),\ln(0.84),\ln(1.16),\ln(1.32))'$. The error terms $ \epsilon $ follow a skew-normal distribution with location parameter $ 0 $, shape parameter $ -5 $, and scale parameter $ 1/\sqrt{1-50/(26\pi)} $ . 
 \item 3. All values are multiplied by 100. True parameter values are in the first row.
\end{tablenotes}
\end{threeparttable}
\end{center}
\end{table}

\subsection{Monte Carlo Simulation for \citet{atkin_exporting_2017}\label{subsec:MC_atk}}

This section conducts simulations calibrated to the parameter estimates and sample structure from the first empirical application \citet{atkin_exporting_2017}. These calibrated experiments evaluate whether the improvements offered by the proposed methods over conventional approximations persist under empirically-realistic conditions. The exercises confirm the reliability of the proposed approach in practical settings.

Data are generated as follows. Corresponding to the data structure in \citet{atkin_exporting_2017}, I first generate a sample of 220 firms, and divide them into 8 strata with sizes $4,48,22,6,52,12,56,20$. Firms in strata 1 to 4 and those in strata 5 to 8 were assigned to treatment groups with probability $1/2$ and $1/4$ respectively. Then firms are expanded three times, representing 3 rounds. I then generate each outcome variable using Eq. (\ref{eq:atk}) in the main text:
\[
\ln(Y_{igt})=\alpha+\sum^{8}_{g=1}\tau_{g}D^{(g)}_{i}+\gamma\ln(Y_{ig0})+\delta_{g}+\theta_{t}+\epsilon_{igt},
\]
where $\ln Y_{ig0}$ is drawn from $\mathcal{N}(0,1)$, $\epsilon_{igt}$ is drawn from $\mathcal{N}(0,\sigma^{2}_{\epsilon})$, and $\sigma_{\epsilon}$ is the RMSE from the original regression for the outcome. Values of $\tau_{g}$ are set to their estimated values for the outcome. I set $\gamma=1$, $\alpha=0$, $\delta_{g}=0$ and $\theta_{t}=0$ for simplicity. \footnote{This is not exactly the same as in \citet{atkin_exporting_2017}, as the original data set is an unbalanced panel with missing values and the treatment probability is unknown.}

\begin{table}[hp]
\small
\caption{Monte Carlo Simulation Results Calibrated to Atkin et al. (2017)}
\label{tab:mc_atk}
\begin{center}
\hspace*{-10mm}
\begin{threeparttable}
\begin{tabular}{cccccccccc}
\hline \hline & \multicolumn{2}{c}{True Values} & & \multicolumn{3}{c}{Estimators} & & \multicolumn{2}{c}{ERR} \\ \cline{2-3} \cline{5-7} \cline{9-10}
Outcome & $ \bar{\tau} $ & $ \rho_b $ & & $ \hat{\bar{\tau}} $ & $ \hat{\rho}_a $ &  $ \hat{\rho}_b $ & & $ z_{\tau} $ & $ z_{\rho} $ \\ \hline \multicolumn{ 10 }{c}{ \textit{Impact of Exporting on Firm Profits} } \\ 
Direct profits & 22.35 & 26.75 &  & 22.34 & 25.16 & 27.65 &  & 6.08 & 4.94  \\
 &  &  &  & (4.42) & (5.54) & (5.70) &  &  &   \\
Reported profits & 18.80 & 22.27 &  & 18.81 & 20.82 & 23.24 &  & 5.98 & 4.92  \\
 &  &  &  & (4.52) & (5.46) & (5.63) &  &  &   \\
Constructed profits & 16.20 & 19.77 &  & 16.18 & 17.69 & 20.75 &  & 6.79 & 4.85  \\
 &  &  &  & (4.67) & (5.50) & (5.71) &  &  &   \\
Hypothetical profits & 33.31 & 49.24 &  & 33.29 & 39.88 & 51.80 &  & 15.20 & 4.71  \\
 &  &  &  & (7.38) & (10.33) & (11.81) &  &  &   \\
\multicolumn{ 10 }{c}{ \textit{Impact of Exporting on Firm Profits (per owner hour)} } \\ 
Direct profits & 17.93 & 21.82 &  & 17.93 & 19.75 & 22.61 &  & 7.30 & 5.03  \\
 &  &  &  & (4.26) & (5.10) & (5.33) &  &  &   \\
Reported profits & 15.97 & 19.23 &  & 15.95 & 17.40 & 20.07 &  & 6.78 & 5.03  \\
 &  &  &  & (4.34) & (5.10) & (5.35) &  &  &   \\
Constructed profits & 13.88 & 16.54 &  & 13.88 & 15.00 & 17.47 &  & 6.08 & 4.88  \\
 &  &  &  & (4.50) & (5.17) & (5.38) &  &  &   \\
Hypothetical profits & 25.91 & 30.89 &  & 25.88 & 29.75 & 32.49 &  & 5.47 & 4.81  \\
 &  &  &  & (5.82) & (7.56) & (7.92) &  &  &   \\
\multicolumn{ 10 }{c}{ \textit{Impact of Exporting on Components of Profits} } \\ 
Output price & 37.60 & 54.01 &  & 37.58 & 45.98 & 56.53 &  & 12.65 & 4.74  \\
 &  &  &  & (7.02) & (10.27) & (11.37) &  &  &   \\
Output & -22.23 & -17.38 &  & -22.22 & -19.76 & -15.81 &  & 7.90 & 4.62  \\
 &  &  &  & (6.34) & (5.09) & (5.60) &  &  &   \\
Hours worked & 4.16 & 4.70 &  & 4.15 & 4.26 & 4.85 &  & 5.54 & 4.99  \\
 &  &  &  & (2.00) & (2.08) & (2.09) &  &  &   \\
Number of looms & -0.72 & -0.46 &  & -0.72 & -0.69 & -0.26 &  & 5.24 & 5.07  \\
 &  &  &  & (2.25) & (2.24) & (2.28) &  &  &   \\
Warp thread balls & 13.29 & 16.31 &  & 13.28 & 14.29 & 17.10 &  & 7.62 & 4.92  \\
 &  &  &  & (3.90) & (4.46) & (4.71) &  &  &   \\
\hline \hline \end{tabular}
\begin{tablenotes}
\footnotesize \item 1. Results from Monte Carlo simulations using a DGP calibrated to the setting in Atkin et al. (2017).
 \item 2. Monte Carlo means and standard deviations (in parentheses) for $\hat{\bar{\tau}}$, $\hat{\rho}_a$, and $\hat{\rho}_b$ across $ 10^5$ replications. Also reported are empirical rejection rates (ERRs) at the 5\% nominal level for two-sided tests of $H_0: \bar{\tau}=\ln(\rho_0+1)$ using $z_\tau$, and of $H_0: \rho_b=\rho_0$ using $z_{\rho}$, where $\rho_0$ is the true value of $\rho_b$.
 \item 3. All values are multiplied by 100. True parameter values appear in the first two columns.
\end{tablenotes}
\end{threeparttable}
\end{center}
\end{table}

Results are summarized in Table \ref{tab:mc_atk}. Overall, the results confirm good finite-sample performance of $\hat{\rho}_{b}$ and inference based on $z_{\rho}$.

\subsection{Imputation-Based Estimates for \citet{alsan_watersheds_2019}\label{subsec:Imputation-AG}}

This section reports estimates of $\bar{\tau}$, $\rho_{a}$ and $\rho_{b}$ for \citet{alsan_watersheds_2019}, using the imputation estimator of \citet{borusyak_revisiting_2024}. I first estimate the following regression on all untreated observations
\begin{equation}
\ln(Y_{it})=\alpha_{i}+\beta_{t}+X^{\prime}_{it}\gamma+\epsilon_{it},\label{eq:stagdid-2}
\end{equation}
where the specification is otherwise the same as regression (\ref{eq:stagdid}), except that the treatment dummies $D_{it}(c,r)$ are dropped. I then obtain the fitted values of $\ln(Y_{it})$, denoted by $\widehat{\ln(Y_{it})}.$ For each treated observation, I calculate the imputed log-point effect $\hat{\tau}_{it}=\ln(Y_{it})-\widehat{\ln(Y_{it})}$, and the corresponding percentage-point effect $\hat{\rho}_{it}=\exp(\hat{\tau}_{it})-1$. For a given set $\mathcal{A}$ defined as in Section \ref{subsec:Example_2_Staggered}, I then compute $\hat{\bar{\tau}}^{imp}=\frac{1}{N_{\mathcal{A}}}\sum_{(i,t)\in\mathcal{A}}\hat{\tau}_{it}$, $\hat{\rho}^{imp}_{a}=\exp(\hat{\bar{\tau}}^{imp})-1$, and $\hat{\rho}^{imp}_{b}=\frac{1}{N_{\mathcal{A}}}\sum_{(i,t)\in\mathcal{A}}\hat{\rho}_{it}$, where $N_{\mathcal{A}}$ is the size of $\mathcal{A}$. For example, $\mathcal{A}$ is the set of all treated observations.

  \begin{table}[h!] \begin{center} \small \caption{Imputation-Based Estimates for the Alsan and Goldin (2019) Application} \label{tab:AG2019_imputation}  \begin{threeparttable}  \begin{tabular}{cccc} \hline \hline &\hspace{20pt} $\hat{\bar{\tau}}^{imp} \hspace{20pt} $ & $ \hspace{20pt} \hat{\rho}_{a}^{imp} \hspace{20pt} $ & $\hspace{20pt} \hat{\rho}_{b}^{imp} \hspace{20pt} $ \\ \hline 
\multicolumn{4}{c}{\textit{ATT for All Treated Units}} \\
&-20.7&-18.7&-9.9\\
\multicolumn{4}{c}{\textit{ATT by Event Time}} \\
0&-17.0&-15.6&-12.7\\
1&-18.6&-16.9&-10.0\\
2&-23.4&-20.9&-15.9\\
3&-19.4&-17.6&-12.0\\
4&-30.3&-26.1&-20.9\\
5&-19.8&-18.0&-6.2\\
\multicolumn{4}{c}{\textit{ATT by Cohort}} \\
1898&-11.1&-10.5&-2.7\\
1899&-21.6&-19.4&-15.7\\
1901&22.8&25.6&30.2\\
1902&-87.3&-58.2&-55.0\\
1903&-36.7&-30.7&-28.0\\
 \hline \hline \end{tabular} \begin{tablenotes} \footnotesize
\item 1. This table reports imputation-based estimates for the Alsan and Goldin (2019) application. \item 2. Estimates are reported for $\hat{\bar{\tau}}^{imp}$, computed as the mean of $\hat{\tau}_{it}$; $\hat{\rho}_{a}^{imp}=\exp(\hat{\bar{\tau}}^{imp})-1$; and $\hat{\rho}_{b}^{imp}$, computed as the mean of $\exp(\widehat{\tau}_{it})-1$ over the relevant treated post-treatment cells. All values are multiplied by 100. \item 3. Panel 1 presents ATT for all treated units. Panel 2 presents ATT by event time $0,1,\ldots,5$. Panel 3 presents ATT by cohort. Standard errors are not reported.
\end{tablenotes} \end{threeparttable} \end{center} \end{table}

Table \ref{tab:AG2019_imputation} reports these point calculations. As discussed in Remark \ref{rem:imputation} in the main text, in the setting considered here, $\hat{\rho}^{imp}_{b}$ has an upper-bound interpretation in expectation.
\end{document}